\newtheorem{theorem}{\bf Theorem}
\newtheorem{definition}{\bf Definition}
\newtheorem{remark}{\bf Remark}
\newcommand{\cM}{\mathcal{M}}
\newcommand{\sTe}{\mathsf{T_{\text{e}}}}
\newcommand{\sN}{\mathsf{N}}
\newcommand{\sF}{\mathsf{F}}
\newcommand{\sB}{\mathsf{B}}
\newcommand{\sM}{\mathsf{M}}
\newcommand{\sQ}{\mathsf{Q}}
\newcommand{\sTnc}{\mathsf{T_{\text{nc}}}}
\newcommand{\sT}{\mathsf{T}}
\newcommand{\bl}[1]{\textcolor{black}{$#1$}}
\newcommand{\tn}[1]{\textnormal{#1}}
\begin{document}

\title{On the Optimality of Coded Distributed Computing for Ring Networks}

\author{Zhenhao Huang, Minquan Cheng, Kai Wan, Qifu Tyler Sun, and Youlong Wu
        \thanks{ Part of this work will be presented at the IEEE International Symposium on
        Information Theory (ISIT), 2025 \cite{huang2025optimality}.}
	\thanks{Z. Huang and Y. Wu are with the School of Information Science and Technology,	ShanghaiTech University, Shanghai, China. Emails: \{huangzhh, wuyl1\}@shanghaitech.edu.cn.}
	\thanks{M. Cheng is with the School of Computer Science and Engineering, Guangxi Normal University, Guilin, China. Email: chengqinshi@hotmail.com.}
        \thanks{ K. Wan is with the School of Electronic Information and Communications, Huazhong University of Science and Technology,
        430074 Wuhan, China. Email: kai\_wan@hust.edu.cn.}
	\thanks{Q. T. Sun is with the Department of Communication Engineering, University of Science and Technology Beijing, Beijing, China. Email: qfsun@ustb.edu.cn.}
}

\maketitle

\begin{abstract}
        We consider a coded distributed computing problem in a ring-based communication network, where $\sN$ computing nodes are arranged in a ring topology and each node can only communicate with its neighbors within a constant distance $d$. 
        To mitigate the communication bottleneck in exchanging intermediate values, we propose new coded distributed computing schemes for the ring-based network that exploit both ring topology and redundant computation (i.e., each map function is computed by $r$ nodes). Two typical cases are considered:  \emph{all-gather} where each node requires all intermediate values mapped from all input files, and  \emph{all-to-all} where each node requires a distinct set of intermediate values from other nodes.
        For the all-gather case, we propose a new coded scheme based on successive reverse carpooling, where nodes transmit every encoded packet containing two messages traveling in opposite directions along the same path. Theoretical converse proof shows that our scheme achieves the optimal tradeoff between communication load, computation load $r$, and broadcast distance $d$ when $\sN\gg d$. For the all-to-all case, instead of simply repeating our all-gather scheme, we delicately deliver intermediate values based on their proximity to intended nodes to reduce unnecessary transmissions. We derive an information-theoretic lower bound for {arbitrary file placements} and show that our scheme is asymptotically optimal under the cyclic placement when $\sN\gg r$. The optimality results indicate that in ring-based networks,  the redundant computation $r$ only leads to an additive gain in reducing communication load while the broadcast distance $d$ contributes to a multiplicative gain. 
        
        
\end{abstract}

\begin{IEEEkeywords}
        Coded computing,  distributed computation, ring networks, communication load.
\end{IEEEkeywords}

\section{Introduction}

Distributed computing frameworks have emerged as a powerful paradigm for processing large-scale data and complex computational tasks, offering significant advantages over centralized approaches through efficient utilization of distributed storage and computing resources \cite{lindsay2021evolution}. This paradigm has found successful applications across diverse domains, particularly in big healthcare data analytics \cite{lee2017big}, deep learning systems \cite{zhang2018quick}, and edge computing \cite{hirsch2018augmenting}.
Nevertheless, when operating under constrained communication resources, these systems frequently encounter severe performance bottlenecks due to the substantial communication overhead of exchanging massive datasets or intermediate computational results \cite{zhang2013performance,ng2021comprehensive}. 

Recent research has demonstrated that integrating computational redundancy and coding techniques can significantly mitigate communication latency \cite{ng2021comprehensive,paschos2018role}. Notably, \cite{li2017fundamental} developed a theoretically optimal coded distributed computing scheme within the MapReduce framework, establishing the fundamental computation-communication trade-off. This seminal contribution has inspired extensive follow-up work in distributed computing. To optimize the trade-off between communication and computation latency, \cite{yu2017optimally} investigated the optimal resource allocation for coded distributed computing. Subsequent work has employed combinatorial designs to reduce system complexity by minimizing the number of sub-files and output functions, a crucial advancement for practical implementations \cite{yan2017placement,cheng2024asymptotically}.
Besides, recognizing the significant potential of wireless computing, several studies have extended coded distributed computing to wireless scenarios, incorporating various channel models such as wireless orthogonal channels \cite{li2017scalable}, wireless interference channels \cite{li2019wireless,bi2024normalized}.
These works typically assumed that the computing nodes in the system can directly communicate with each other, which may be a stringent condition in practical deployments.
In light of this, \cite{wan2020topological} developed an innovative scheme leveraging cost-effective $t$-ary fat-tree networks for MapReduce applications. A hierarchical system in which computing nodes connect to a server via relay nodes was also investigated in \cite{hu2024exploiting}.

Inspired by the practical implementation of distributed computing systems, this paper considers a coded computing system where the computing nodes are connected through a ring topology network. Ring networks provide simplified network management and efficient bandwidth utilization, facilitating their prevalence in distributed computing applications.
For example, Baidu introduced the Ring All-Reduce algorithm into deep learning, arranging computing nodes (GPUs) in a ring topology where each node communicates only with its two neighbors \cite{githubGitHubBaiduresearchbaiduallreduce}.   This architecture enables full bandwidth utilization for parallelized aggregation acceleration.  The approach has been successfully extended to hierarchical, 2D-Torus, and 3D-Torus topologies, etc \cite{jia2018highly,jiang20202d} to accommodate varying computational scales.  
Similar topological advantages have been demonstrated in federated learning systems, where ring-based algorithms not only reduce communication overhead but also enhance robustness against heterogeneous data distributions \cite{shen2023ringsfl} and Byzantine attacks  \cite{elkordy2022basil}.
The ring topology is also prevalent in satellite communication. For example, satellites in polar orbits are spaced apart by a certain angular distance, and their orbits are circular \cite{radhakrishnan2016survey, ekici2001distributed}. 
Note that in the ring network architecture, each node simultaneously operates as a source, sink, and relay node while maintaining only limited connectivity with neighboring nodes. Existing coded computing schemes, reliant on broadcast over a shared link connecting all nodes, {\it do not} apply to our problem. Therefore, it is essential to design novel schemes jointly considering the ring topology,  multicast opportunities, and computing tasks to improve the communication efficiency.  

In this paper, we consider a coded distributed computing system comprising $\sN$ nodes arranged in a ring topology, as depicted in Fig. \ref{fig: sysmodel}, where each node computes its designated output function from input files $\{w_1,\ldots, w_\sN\}$.  Key system parameters include:  1) \emph{computation load} $r\in\{1,\ldots, \sN\}$, representing that, a map function of an input file is computed averagely by $r$ nodes), 2) \emph{broadcast distance} $d\in\{1,\ldots,\lfloor\frac{\sN}{2}\rfloor\}$, specifying each node’s maximum direct connection range within the ring. We consider two typical distributed computing problems \cite{sanders2019sequential}: \emph{all-gather} where every node requires all intermediate values (IVs) mapped from input files and \emph{all-to-all} where each node requires a distinct set of IVs mapped from input files, as shown in Fig. \ref{fig: allgather and all2all}. Our goal is to design coded distributed computing to minimize the normalized communication load (NCL), defined as the communication bits normalized by the number of nodes and the bits of an IV, given any computation load $r$ and the broadcast distance $d$. 
Our main contributions are summarized as follows.
\begin{itemize}
        \item For all-gather computing, we propose a novel coded transmission scheme employing successive reverse carpooling where nodes transmit an encoded packet containing two messages traveling in opposite directions along the same path. Our scheme achieves the NCL of $\lceil \frac{\sN - r}{2d}\rceil$. We then derive a converse lower bound as $\frac{\sN - r}{2d}$, demonstrating that our scheme closely approaches the optimal NCL and exhibits asymptotic optimality when $\sN\gg d$.  
        \item For all-to-all computing, we proposed a coded transmission scheme that achieves the NCL of $O(\frac{(\sN-r)^2}{8d})$, and is asymptotically optimal under the cyclic file placement,\footnote{The cyclic placement has a simple form to be implemented, and is independent of the computation tasks, making it suitable for offline scenarios. Therefore, it has been widely adopted in many coded distributed computing, including \cite{tandon2017gradient,ye2018communication,huang2023fundamental}.} when $\sN$ is relatively large compared to $r$.  We also propose another scheme that is optimal when $d=1$ and $r\geq \sN/2$, regardless of file placement. {In addition, we establish a placement-agnostic converse bound that provides a general performance benchmark for arbitrary file assignments.}  
        \item From the optimality results of both all-gather and all-to-all computing, an interesting insight is that the computation load $r$ only leads to an additive gain in the reduction of communication load, while the broadcast distance $d$ contributes to a multiplicative gain. This behavior differs from prior results on coded distributed problems \cite{li2017fundamental,yu2017optimally,yan2017placement,cheng2024asymptotically,li2017scalable,li2019wireless,bi2024normalized,wan2020topological,hu2024exploiting}, where the computation load $r$ could lead to multiplicative gains. 

\end{itemize} 

{Many recent works in coded caching \cite{maddah2014fundamental,maddah2014decentralized}, coded distributed computing \cite{li2017fundamental,li2015coded}, and linearly separable computing \cite{khalesi2023multi} have demonstrated communication-computation gains under arbitrary, pre-set, or decentralized placements, typically assuming a shared link or fully connected network. Topology-aware coded distributed computing has further incorporated network constraints into the system design, such as fat-tree architectures \cite{wan2020topological}.  In comparison, the present work focuses on ring networks with limited broadcast distance $d$, where multicast opportunities are structurally restricted by topology, and the cyclic placement is adopted as a topology-aligned design. Notably, when $d$ is sufficiently large (\(d \geq \lfloor \sN/2 \rfloor\)), the ring network turns to a fully connected topology; for smaller $d$, the stronger topological constraints render existing shared-link schemes inapplicable without substantial modification.}

The remainder of this work is organized as follows. The problem formulation and preliminary knowledge are given in Section \ref{sec: formulation}. Section \ref{sec: mainresults} provides the main results and compares the performance of the proposed scheme with the lower bound. The proofs of the information-theoretic lower bounds are provided in the appendix. Sections \ref{sec: alltoall} and \ref{sec: allunicast} present the proposed scheme for the all-gather and all-to-all problems, respectively. The work is concluded in Section \ref{sec: conclusion}.

\emph{Notation}: We use sans-serif font for constants, boldface for vectors and matrices, and calligraphic font for sets. Let $\mathbb{N}^{+}$ denote the set of positive integers. For any $k \in \mathbb{N}^{+}$, define $[k] \triangleq \{1,2,\ldots,k\}$. For any integer $a$, define $a_{\bmod \sN} \triangleq \tilde{a} \in [\sN]$ such that $(a-\tilde{a}) \bmod \sN = 0$. For a vector $[X^{(1)},\ldots,X^{(t)}]$, we write $X^t$ for brevity. The operation $\oplus$ denotes addition over a finite field.

\begin{table}[htbp]
    \centering
    \footnotesize 
    \caption{Summary of Main Notation}
    \label{tab:notation}
   { \begin{tabular}{ll}
        \toprule
        Symbol & Description \\ 
        \midrule
        $\sN$ & Number of computing nodes in the ring network \\
        $n_i$ & Node $i$, $i \in [\sN]$ \\
        $w_j$ & Input file $j$, $j \in [\sM]$ \\
        $r$ & Computation load (each file mapped by $r$ nodes on average) \\
        $d$ & Broadcast distance in the ring topology \\
        $V_i$ & Intermediate value (IV) of file $i$ needed by all nodes in all-gather problem \\
        $v_i^j$ & Intermediate value (IV) of file $j$ needed by node $i$ in all-to-all problem \\
        $\sT_1(r,d)$ & NCL for all-gather computing \\
        $\sT_2(r,d)$ & NCL for all-to-all computing \\[3pt]
        $X_i^{(t)}$ & Coded transmission from node $n_i$ at clock tick $t$ \\
        $X_i^{(j,k)}$ & Coded transmission from node $n_i$ in round $j$, step $k$ \\
        \bottomrule
    \end{tabular}}
\end{table}

\section{Problem Formulation}
\label{sec: formulation}
This section presents our problem formulation and some preliminary background knowledge in network coding.

\subsection{System Model}\label{sec: sub-sysmodel}
\begin{figure}[tbp]
        \centering
        \includegraphics[width=0.4\linewidth]{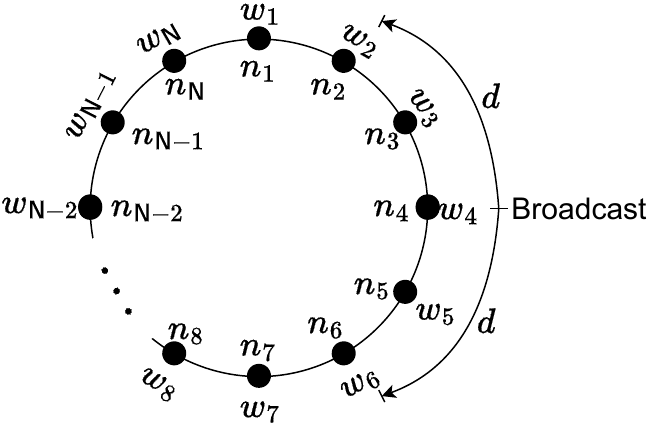}
        \caption{Ring network with $\sN$ nodes, computation load $r=1$ and broadcast distance $d=2$.}
        \label{fig: sysmodel}
\end{figure}
Consider a ring network where $\sN\in\mathbb{N}^{+}$ nodes are placed at equal distances on a circle, and each node can successfully broadcast information to some other nodes within a certain distance along the ring.\footnote{We refer to the sending method of nodes as \emph{broadcasting} following \cite{fragouli2008efficient}. However, only a subset of nodes in the network can receive the signals directly transmitted by a given node.} Specifically, the \emph{broadcast distance} $d$ means that each node $n_i$ can broadcast its message to $2d$ neighboring nodes: $\{n_{i-d}, n_{i-d+1}, \ldots, n_{i-1}, n_{i+1}, \ldots, n_{i+d-1}, n_{i+d}\}_{\bmod \sN}$. 
For example, as shown in Fig.~\ref{fig: sysmodel}, the messages broadcast by node $n_4$ can be successfully received by nodes $\{n_2,n_3,n_5,n_6\}$ when the broadcast distance is $d=2$.  Since $d\geq \lfloor\frac{\sN}{2}\rfloor$ implies full connection topology, we mainly focus on $d\in\{1,\ldots,\lfloor\frac{\sN}{2}\rfloor\}$. 

The $\sN$ nodes are assigned some computing tasks over $\sN$ input files $w_1, \ldots, w_{\sN}\in\mathbb{F}_q^{\sF}$ for some $\sF\in\mathbb{N}^{+}$, where $\mathbb{F}_q$ denotes a finite field of order $q$.\footnote{{A system with an arbitrary number of input files and output functions can be reduced to this setting via appropriate padding and batching. See Remark~\ref{convertion to mapreduce} for further discussion.}} Each node $n_k$ initially caches a set of the files $\cM_k\subseteq\{w_1,\ldots,w_\sN\}$. We assume that each input file is at least cached by one node, i.e., $\bigcup\limits_{k=1,\ldots, \sN}\cM_k= \{w_1,\ldots, w_\sN\}$. Node $n_k$ first locally computes the functions that map the files in the set $\cM_k$ into some  IVs, then exchanges IVs with neighboring nodes to compute its output function.  
We define the computation load as follows.
\begin{definition}[\emph{Computation Load}]\label{Def: computation load}
        Define the \emph{computation load}, denoted by $r$, $1\leq r\leq \sN$, as the total number of input files initially stored by the $\sN$ nodes, normalized by the number of input file $\sN$, i.e., $r=\frac{\sum_{i=1}^{\sN}|\cM_i|}{\sN}$. The computation load $r$ can be interpreted as the average number of nodes that map each input file.
\end{definition}

Suppose the network is clocked during the exchanging of IVs, i.e., a universal clock ticks $\sTe$ times.
As shown in Fig.~\ref{fig: allgather and all2all}, we consider the two computing scenarios: 1) all-gather computing and 2) all-to-all computing, where the goals in the two scenarios are decomposed as follows:

\subsubsection{All-Gather}
Each file $w_i$, $i\in[\sN]$ is first mapped into IVs
$$V_i = g_{1,i}(w_i),$$ 
where $g_{1,i}$ is a map function $\mathbb{F}_q^{\sF}\to \mathbb{F}_q^{\sB_{1}}$ for file $w_i$.
After clock tick $t-1$ and before clock tick $t$ for $t=1, \ldots, \sTe$,
node $n_k$ creates a coding symbol $X_k^{(t)} \in\mathbb{F}_{q}^{l_k^{(t)}}$, for some $l_k^{(t)}\in\mathbb{N}$, as a function of IVs computed locally and the received messages from pastime clock $Y_k^{(t-1)}, \ldots, Y_k^{(1)}$, i.e.,
\begin{align}\label{eq: EncXi_s}
        X_k^{(t)} = \phi_{1,k}^{(t)}\left( \left( V_i: w_i \!\in\! \cM_k \right), \left( Y_k^{(t-1)}\!\!, \ldots, Y_k^{(1)} \right) \right),
\end{align}
where $\phi_{1,k}^{(t)}$ is an encoding function and $Y_k^{(i)}$ is the collection of messages received by node $n_k$ at clock $i\in[t-1]$. Having generated the messages $X_k^{(t)}$, node $n_k$ broadcasts them and nodes located within the broadcast distance, i.e., nodes in $\{n_{k-d}, n_{k-d+1}, \ldots, n_{k-1}, n_{k+1}, \ldots, n_{k+d-1}, n_{k+d}\}_{\bmod \sN}$ receive $X_k^{(t)}$ at clock tick $t$.
By the end of the clock tick $\sTe$, node $n_k\in[\sN]$ uses the received messages $\left(Y_k^{(1)}, \ldots, Y_k^{(\sTe)}\right)$ and the IVs computed locally to construct the desired IVs, i.e., 
\begin{align}\label{eq: Dec-alltoall}
        \!\!\!(V_1, \ldots, V_\sN) \!=\! \psi_{1,k} \!\left(\! \left(V_i\!:\! w_i\in\cM_k \right)\!,\!  \left(Y_k^{(1)}, \ldots, Y_k^{(\sTe)}\right) \!\right),
\end{align}  
where $\psi_{1,k}$ is an appropriate decoding function at node $n_k$. 
Finally, the node $n_k$ is responsible for computing output function $h_{1,k}$, which take all IVs as inputs, i.e., 
\begin{align*}
        D_{1,k} = h_{1,k}(V_1, \ldots, V_\sN), \qquad k\in[\sN].
\end{align*}

\subsubsection{All-to-All}
Each file $w_i$, $i\in[\sN]$ is first mapped into IVs 
$$(v_i^1, \ldots, v_i^{\sN}) = g_{2,i}(w_i),$$
where $g_{2, i}$ is a map function $\mathbb{F}_q^{\sF}\to (\mathbb{F}_q^{\sB_{2}})^{\sN}$ for file $w_i$.\footnote{We can think of $V_i$, $v_i^j$, $\forall i,j\in[\sN]$ as symbols, or as packets of symbols of the same size, with the operation applied to each packet being symbol-wise. In the following, we will use the terms 'symbol' and 'packet' interchangeably. }
The transmission process of IVs for all-to-all computing is similar to all-gather computing, while the communication messages, desired IVs, and output functions of the nodes are replaced by the following.
Node $n_k$ generates the message 
\begin{align}\label{eq: EncXi_m}
       \!\!\!\! X_k^{(t)}\! = \phi_{2,k}^{(t)}\left( \!\left(v^{j}_i\!:\! j\!\in\![\sN], w_i\!\in\!\cM_k \right), \left(Y_i^{(t\!-\!1)}, \ldots, Y_i^{(1)}\right)\!\right),
\end{align}
and desires the IVs
\begin{align}\label{eq: Dec-allunicast}
        &(v^k_1, \ldots, v^k_\sN)  = \psi_{2,k}\! \left(\left(v^{j}_i: j\!\in\![\sN], w_i\in\cM_k \right)\!, \left(Y_k^{(1)}, \ldots, Y_k^{(\sTe)}\right) \right).
\end{align} 
Finally, node $n_k$ is responsible for computing output function $h_{2,k}$, which take specific IVs as inputs, i.e., 
\begin{align*}
        D_{2,k} = h_{2,k}(v_1^{k}, \ldots, v_{\sN}^{k}), \qquad k\in[\sN].
\end{align*}

\begin{figure} [tbp]
        \subfloat[All-Gather]{
                \begin{minipage}[t]{\linewidth}
                        \centering
                        \includegraphics[width=0.55\textwidth]{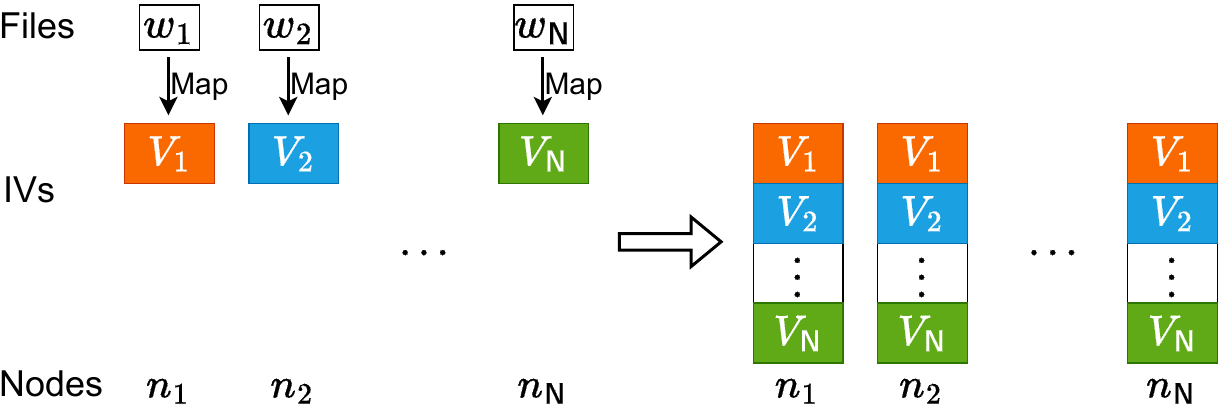}
                \end{minipage}
        }

        \subfloat[All-to-All]{
                \begin{minipage}[t]{\linewidth}
                        \centering
                        \includegraphics[width=0.55\textwidth]{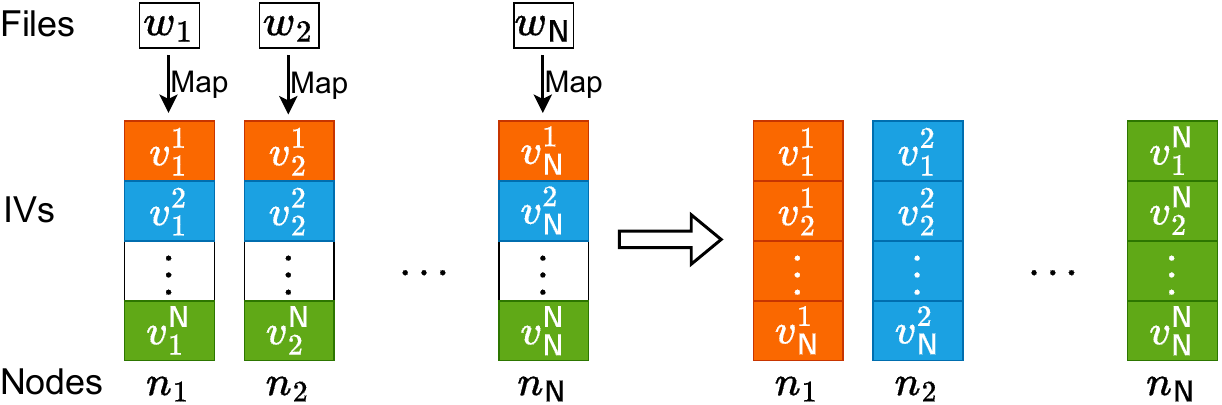}
                \end{minipage}
        }
        \caption{Two computing scenarios when $r=1$: (a) All-Gather and (b) All-to-All.\label{fig: allgather and all2all}}
\end{figure}

{Consistent with the literature on coded distributed computing, we assume that data placement occurs in a preprocessing phase. Our focus is on designing efficient transmission schemes for the delivery of IVs, which typically constitutes the communication bottlenecks. To this end}, we introduce the following metric to formulate the communication load.
\begin{definition}[\emph{Normalized Communication Load (NCL)} ]
        Given the computation load $r$ and broadcast distance $d$, we define the \emph{normalized communication loads} for {the} all-gather and all-to-all computing problems  as $\sT_{1}(r,d) = \frac{\sum_{t\in[\sTe]}\sum_{k\in[\sN]} l_k^{(t)}}{\sN \sB_{1}}$ and $\sT_{2}(r,d) = \frac{\sum_{t\in[\sTe]}\sum_{k\in[\sN]} l_k^{(t)}}{\sN \sB_{2}}$, respectively. 
        That is,  the NCL represents the normalized number of bits communicated by each node in the network. Define the minimum NCLs for the all-gather and all-to-all computing problems as $ \sT_{1}^{\star}(r,d)$ and  $\sT_{2}^{\star}(r,d)$, respectively. The objective of this paper is to characterize $\sT_1^{\star}(r,d)$ and $\sT_2^{\star}(r,d)$, given the computation load $r$ and broadcast distance $d$.
\end{definition}
Following the same definition in \cite{fragouli2008efficient}, each time a broadcast is made by a node, it is counted as one transmission, then $\sT_1$ and $\sT_2$ can be interpreted as the number of transmissions required for a node to facilitate the information exchange in the network. 

{These two problems require fundamentally new transmission and decoding designs compared with existing coded computing frameworks, due to the distance-limited broadcast constraint imposed by the ring topology. Moreover, the all-to-all problem is inherently more challenging than all-gather because information flows are directional. As will be shown in the subsequent sections, addressing these challenges requires the integration of multi-round scheduling, topology-aware coding, and successive decoding mechanisms tailored to the ring structure.}

\begin{remark}{(Latency)
     Consider a full-parallelism model where all nodes transmit simultaneously over orthogonal resources (e.g., in time, frequency, or code domains) without mutual interference. The latency can be characterized by $L(r, d) = \frac{\sum_{t\in[\sTe]}\max\left(l_1^{(t)}, \dots,  l_{\sN}^{(t)} \right)}{ \sB}$, where $\sB$ denotes per-node transmission rate (in bits per time slot). It can be observed that, in symmetric configurations where each node transmits an equal number of bits per clock tick, the latency is proportional to the NCL. Consequently, minimizing the NCL is equivalent to minimizing latency in such settings.}
\end{remark}
\begin{remark}{(Communication load of data placement)
        A worst-case analysis of the data placement phase reveals a transmission overhead of $O(\sN^2 r)$ to place $\sN r$ files, given the $O(\sN)$ maximum path length in a ring network. This results in a normalized communication load of $O(\sN r)$ per file. In many application scenarios, this cost would become negligible when the placement is amortized over multiple computation rounds. Furthermore, in scenarios where placement represents local data acquisition (e.g., sensing/satellite systems), the cost is application-dependent and remains beyond the scope of this work.}
\end{remark}
\begin{remark}\label{convertion to mapreduce}{(Relation to the MapReduce framework \cite{li2017fundamental,li2015coded}) 
        The system with an arbitrary number of files and output functions can be reduced to the above setting through appropriate padding and batching of files and functions. In particular, our model can be interpreted within the MapReduce framework~\cite{li2017fundamental} by considering the special cases where each output function is computed either by all nodes (all-gather) or by exactly one node (all-to-all), under the additional constraint that each computing node has a limited broadcast distance. Specifically, consider the MapReduce-style system with $\sN$ nodes are assigned $\sQ\in\mathbb{N}^{+}$ output functions which takes input the $\sM\in\mathbb{N}^{+}$ files. Let $g\triangleq\frac{\sM}{\sN}, p\triangleq\frac{\sQ}{\sN}\in\mathbb{N}^{+}$, which can be satisfied by injecting empty files or functions into the system when they are not integers \cite{li2017fundamental}. We can partition the $\sM$ input files into $\sN$ disjoint batches $\{\mathcal{B}_1, \ldots, \mathcal{B}_{\sN}\}$, each containing $g$ files. Let $\mathcal{W}_k\subseteq[\sQ]$ be the index set of output functions assigned to node $n_k$. 
        By viewing each batch $\mathcal{B}_i$ as a whole file $\hat{w}_i$, the MapReduce-style system can be transformed into our system model with $\sN$ files and $\sN$ nodes:
        \begin{itemize}
                \item $\mathcal{W}_i = \mathcal{W}_j$ and $|\mathcal{W}_i| = Q$ for all $i,j\in[\sN]$ corresponds to the all-gather computing, where each output function is computed by all nodes.
                \item $\mathcal{W}_i \cap \mathcal{W}_j = \emptyset$ and $|\mathcal{W}_i| = |\mathcal{W}_j| = \frac{\sQ}{\sN}$ for all $i \neq j$ corresponds to the all-to-all computing, where each output function is computed by exactly one node.
        \end{itemize}
        In the MapReduce-style system~\cite{li2017fundamental}, the computation load is defined identically to Definition~\ref{Def: computation load}. The normalized communication load is also defined analogously, differing from our definition only by an additional scaling factor. In most of the paper, we take $\sM=\sQ=\sN$; this setting simplifies exposition while preserving the generality of the provided results and analytic techniques.}
\end{remark}
\begin{remark}(Relation to the network coding problem)
       When $r=1$, the transmission of $(V_i, \ldots, V_i)$ in all-gather is also called all-to-all broadcast in network coding problem \cite{fragouli2008efficient}. If we identify the $r\geq 2$ IVs of a node as a file, and let the IVs be cyclically placed on nodes (see \eqref{eq: cyclic file}), the transmission process in all-gather can be viewed as the problem \cite{fragouli2008efficient} with correlated files of cyclic overlaps. The correlation is modeled as: the file of each $r$ neighboring nodes contains a unique common block. 
\end{remark}

\subsection{Preliminary}

\subsubsection{Reverse Carpooling}
Our transmission strategy is built on \emph{reverse carpooling}  \cite{effros2006tiling} that realizes the benefits of network coding. As demonstrated in Fig.~\ref{fig: ReverseCar}~(a), where nodes $n_1$ and $n_3$ want to exchange packets through a relay node $n_2$. After receiving packets $P_a$ and $P_b$ sent from two different nodes on the opposite sides, the relay $n_2$ broadcasts the linear combination $P_1\oplus P_2$. Each of $n_1$ and $n_3$ can then obtain the desired packet by subtracting the packet it sent previously from the mixed packet. Therefore, with network coding, it needs $3$ transmissions. In contrast, the traditional forwarding approach requires $4$ transmissions (two for each packet). In general, reverse carpooling involves two information flows that traverse a common path in opposite directions. For example, as shown in Fig.~\ref{fig: ReverseCar}(b), the path $(n_1, n_2, n_3, n_4)$ is shared by two flows, one from $n_1$ to $n_4$ and the other from $n_4$ to $n_1$. {Without reverse carpooling, each hop must forward packets for the two flows separately, resulting in double transmissions. In contrast, reverse carpooling utilizes XOR-based broadcasting for bidirectional traffic. This allows a single transmission on shared links to serve two flows simultaneously. Consequently, under a pipelined transmission schedule, the number of required transmissions on the shared path is reduced by half, leading to an approximately $50\%$ transmission saving.} There will be multiple information flows in our ring network. The reverse carpooling technique can benefit the transmission when some broadcasts form the structure as Fig.~\ref{fig: ReverseCar}(a). 

\begin{remark}
In the considered ring-based computing model, each node will serve as a source, sink, and relay node, rather than fulfilling a single role as in traditional reverse carpooling problems. 
Furthermore, unlike traditional reverse carpooling, where edge nodes can not perform packet mixture, there are no edge nodes in a ring network.
Finally, we introduce repetitive computations among nodes, and this redundancy could help improve reverse carpooling during transmissions.
\end{remark}
\begin{figure}[tbp]
        \centering
        \includegraphics[width=0.4\linewidth]{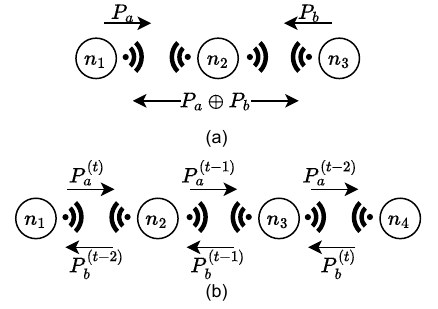}
        \caption{(a) Reverse carpooling with $3$ nodes. (b) Reverse carpooling for two flows. $P_a^{(t)}$ and $P_b^{(t)}$ are the packets sent by nodes $n_1$ and $n_4$ at time clock $t$, respectively. At clock $t$, node $n_2$ broadcasts $P_a^{(t-1)}\oplus P_b^{(t-2)}$, and node $n_3$ broadcasts $P_a^{(t-2)}\oplus P_b^{(t-1)}$. It effectively enables the two flows to traverse a common path without interfering with each other.}
        \label{fig: ReverseCar}
\end{figure}

\subsubsection{Efficient Broadcasting Using Network Coding}\label{sec: sub-tramethod}

A problem similar to transmission in all-gather computing, under the setting $r=1$ and $d=1$,
has been investigated \cite{fragouli2008efficient}. It has been shown that network coding benefits the transmission.  
The $\sN$ nodes are first partitioned in two sets $A=\{\alpha_1,\ldots,\alpha_{\frac{\sN}{2}}\}$ and $B=\{\beta_1, \ldots, \beta_{\frac{\sN}{2}}\}$ of size $\frac{\sN}{2}$ each.\footnote{$\sN$ is assumed to be even. The odd $\sN$ yields to the transmission load of the same order.} As we explain later, it is sufficient to show that each node in the sets $A$ can successfully send one information unit to all nodes in sets $A$ and $B$. Then the procedure is repeated symmetrically to send information from the nodes in sets $B$. 
The nodes communicate with other nodes in rounds, and the transmission follows two phases in each round. In the first phase, the nodes in set $A$ transmit the sum of two symbols it has received lately (or the input symbols), and the nodes in set $B$ receive. In the second phase,  the nodes in $B$ transmit and the nodes in $A$ receive. The transmission strategy is described as the algorithm \ref{alg:alg1}.
An example of the network for $\sN=8$ is depicted in Fig.~\ref{fig: TraExample n=8}, where the sets $A$ and $B$ are in color blue and red, respectively.   

\begin{figure}[tbp]
        \centering
        \includegraphics[width=0.35\linewidth]{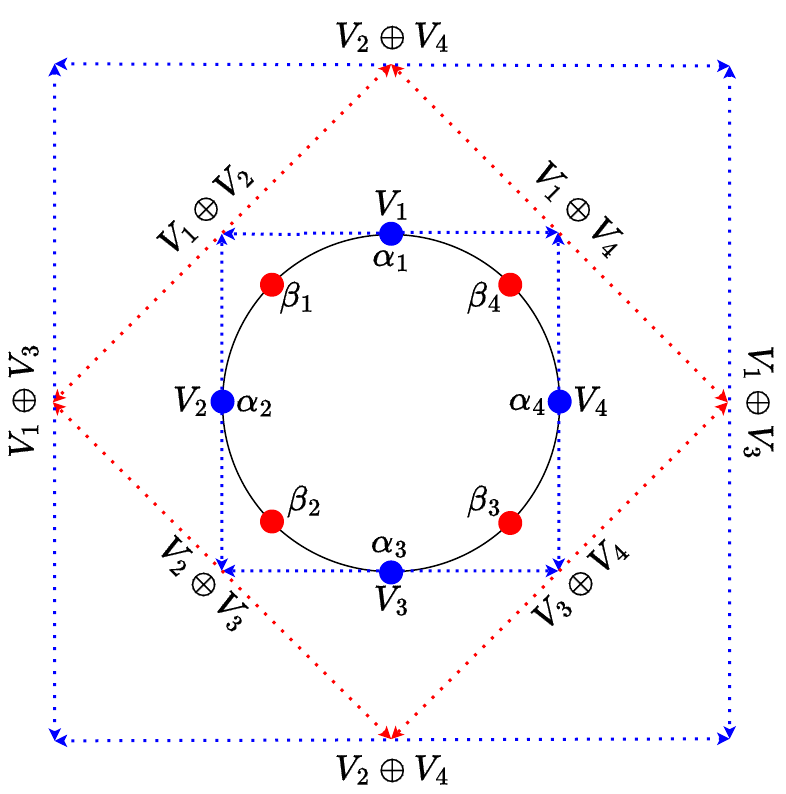}
        \caption{Efficient broadcasting over ring network with $\sN=8$ nodes, computation load $r=1$ and broadcast distance $d=1$.}
        \label{fig: TraExample n=8}
\end{figure}
\begin{algorithm}[tbp]
        \caption{Network Coding for Ring Networks \cite{fragouli2008efficient}}\label{alg:alg1}
        \begin{algorithmic}
        \STATE 
        \STATE Round $k$:
        \begin{itemize}
                \item Phase 1:
                        \\
                        if $k=1$, each $\alpha_i\in A$ transmits its information symbol $V_i$.
                        \\
                        if $k>1$, each $\alpha_i\in A$ transmits the sum of the two information symbols it received in phase $2$, round $k-1$.

                \item Phase 2:\\
                        each $\beta_i\in B$ transmits the sum of the two information symbols it received in phase $1$, round $k$.
        \end{itemize}
        \end{algorithmic}
\end{algorithm}

Following Algorithm \ref{alg:alg1},  at the end of round $k$, each node in $A$ and $B$ can receive two new symbols from the two source nodes that are $2k$ and $2k-1$ nodes away along the circle, respectively. The algorithm will finish after at most $\frac{\sN}{4}$ rounds, and all nodes can decode each input symbol. We can conclude that the NCL is 
\begin{align}\label{eq: tra_T_nc}
        \sT^{\prime}_{\text{nc}} = \frac{ (\frac{\sN}{2}+\frac{\sN}{2}) \times \frac{\sN}{4} \times 2 }{ \sN } = \frac{\sN}{2}.
\end{align} 
In the case of an odd number of nodes, a virtual node can be added to the network, allowing the partition and transmission operations of the real nodes to proceed as if the number of nodes were even. 
However, the virtual node does not transmit messages. When the messages from its neighbors are sent to the virtual node, these messages will skip the virtual node and be forwarded to the next node along the circle. With this modification, it yields the NCL of $\sT^{\prime}_{\text{nc}} = \frac{\sN-1}{2}$.
Since the broadcast distance is $d=1$, i.e., each broadcast transmission can transfer at most one symbol to two receivers, the obtained NCL is (order) optimal.

Coding in such a canonical configuration (i.e., $r=1$ and $d=1$) offers benefits. However, the problems with general $r$ and $d$, as well as for all-to-all computing, have not been sufficiently explored. In the following sections, we investigate the relationship between broadcast distance, computation load, and communication efficiency by proposing transmission strategies and providing information-theoretic analysis.

\section{Main Results}\label{sec: mainresults}

Our main results include novel achievable and converse bounds on the NCL of the coded computing system over a ring topology network. 
For the all-gather problem, the converse bound on NCL holds for any file placement.
For the all-to-all problem, the achievable bound on NCL is asymptotically optimal under the cyclic file placement. The converse bound under the cyclic file placement is provided for the computation load $r<\frac{\sN}{2}$; a converse bound for arbitrary file placement when $r\geq\frac{\sN}{2}$ and $d=1$ is also provided.

Combining the achievable and converse bounds for $r\geq 1$ and $1\leq d \leq \lfloor\frac{\sN}{2}\rfloor$, as presented in Section \ref{sec: alltoall}, the optimal NCL of the all-gather problem is as follows. 
\begin{theorem}\label{Theo: aa_uplow}
        For the all-gather computing system with a computation load $ r\in\{1,\ldots,\sN\}$ and a broadcast distance $d\in\{1,\ldots, \lfloor \frac{\sN}{2}\rfloor\}$  over the considered ring networks with $\sN$ nodes, the following  NCL is   {achievable}
        \begin{IEEEeqnarray}{rCl}\label{eq: upperThm1}
                \sT_{1}^{\tn{ach}}(r,d) \triangleq       \left\lceil \frac{\sN - r}{2d} \right\rceil.
        \end{IEEEeqnarray}
        The optimal NCL is lower bounded by
             \begin{align}\label{eq: theo1}
        \sT_{1}^{\star}(r,d) \geq \frac{\sN-r}{2d},
              \end{align} 
        {for any file placement.}
       Moreover, we have 
        \begin{IEEEeqnarray}{rCl}\label{eq:Them1Opt}
 \sT_{1}^{\tn{ach}}(r,d) -  \sT_{1}^{\star}(r,d) <1 ~\text{and}~   \lim_{{\sN}/{d}\to\infty} \frac{\  \sT_{1}^{\tn{ach}}(r,d) }{ \sT_{1}^{*}(r,d) } = 1.     
\end{IEEEeqnarray}
{Whenever $2d \mid (\sN-r)$, the proposed scheme is exactly optimal.}
\end{theorem}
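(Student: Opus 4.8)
The plan is to obtain Theorem~\ref{Theo: aa_uplow} by sandwiching the optimal load $\sT_1^{\star}(r,d)$ between the achievable scheme of Section~\ref{sec: alltoall} and an information-theoretic converse, after which the two claims in \eqref{eq:Them1Opt} collapse to elementary facts about the ceiling function. The achievability $\sT_1^{\star}(r,d)\le \sT_1^{\tn{ach}}(r,d)=\lceil(\sN-r)/(2d)\rceil$ is, by definition of the minimum NCL, supplied by any valid scheme; I would instantiate it with the cyclic placement so that every node already holds $r$ consecutive IVs and is missing exactly $\sN-r$ of them, split the missing IVs into a clockwise and a counterclockwise flow of size about $(\sN-r)/2$ each, and let every broadcast advance a flow by $d$ hops while relays XOR the two opposing flows (successive reverse carpooling, cf.\ Fig.~\ref{fig: ReverseCar}). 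Charging one transmission per node per block of $2d$ missing IVs then yields the ceiling.

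The substantive part is the converse $\sT_1^{\star}(r,d)\ge(\sN-r)/(2d)$, which I would establish by a cut-set / entropy-counting argument valid for \emph{every} file placement. The two ingredients are: (i) node $n_k$ must recover the $\sN-|\cM_k|$ IVs it cannot compute locally, and since the map outputs are modeled as independent and uniform the missing IVs carry full entropy even given the cached ones, so the number of symbols $n_k$ receives satisfies $H\!\left(Y_k^{(1)},\dots,Y_k^{(\sTe)}\right)\ge (\sN-|\cM_k|)\,\sB_1\log q$; and (ii) each transmitted symbol is heard by at most $2d$ nodes, so the total number of received symbols is at most $2d\sum_{k}\sum_{t}l_k^{(t)}$. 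Summing (i) over all $\sN$ nodes and using $\sum_k|\cM_k|=r\sN$ produces a demand of $\sN(\sN-r)\sB_1$ symbols, while (ii) supplies at most $2d\cdot\sN\sB_1\sT_1$, whence $2d\,\sN\sB_1\sT_1\ge\sN(\sN-r)\sB_1$, i.e.\ $\sT_1\ge(\sN-r)/(2d)$. Phrasing (i) through the single conditional-entropy bound $H(Y_k)\ge I(\text{missing};Y_k\mid\text{cached})=(\sN-|\cM_k|)\sB_1\log q$ makes the independence usage rigorous.

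Finally, \eqref{eq:Them1Opt} follows immediately. Since the achievable load upper-bounds the optimum and the converse lower-bounds it, $(\sN-r)/(2d)\le\sT_1^{\star}(r,d)\le\sT_1^{\tn{ach}}(r,d)=\lceil(\sN-r)/(2d)\rceil$. Hence the gap obeys $\sT_1^{\tn{ach}}(r,d)-\sT_1^{\star}(r,d)\le\lceil(\sN-r)/(2d)\rceil-(\sN-r)/(2d)<1$ because $\lceil x\rceil-x\in[0,1)$, and the ratio is squeezed as $1\le \sT_1^{\tn{ach}}(r,d)/\sT_1^{\star}(r,d)\le \lceil(\sN-r)/(2d)\rceil\big/\big((\sN-r)/(2d)\big)\le 1+2d/(\sN-r)$, which tends to $1$ once $(\sN-r)/d\to\infty$; for bounded $r$ this is exactly the regime $\sN/d\to\infty$.

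I expect the main obstacle to be the rigor of the converse rather than its arithmetic: one must argue carefully that a single broadcast reaches at most $2d$ distinct receivers under the ring model, that the aggregate ``fan-out $2d$'' count of received symbols is a valid upper bound, and that the locally computable IVs genuinely carry no information about the missing ones. The achievability's only delicate point is the boundary bookkeeping that forces the ceiling when $\sN-r$ is not a multiple of $2d$; both reduce to routine checks once the scheme and the entropy inequality are in place, and the limit statement in \eqref{eq:Them1Opt} is then purely mechanical.
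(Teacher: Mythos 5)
Your proposal is correct and takes essentially the same route as the paper: the same cyclic-placement, successive reverse-carpooling scheme for achievability, the same per-node decodability (Fano-type) entropy bound combined with the fan-out-$2d$ count of received symbols summed over all nodes for the converse in \eqref{eq: theo1}, and the same elementary ceiling-function facts for \eqref{eq:Them1Opt}. The only minor difference is that you are slightly more careful than the paper in observing that the ratio limit really needs $(\sN-r)/d\to\infty$ (e.g., bounded $r$ in the regime $\sN/d\to\infty$), a caveat the paper glosses over by calling \eqref{eq:Them1Opt} straightforward.
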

\begin{proof}
        The achievable scheme is presented in Section \ref{sec: all-gather_general}.
        The proof of the lower bound is given in Appendix \ref{sec: sub-prof_sintasks_lb}. The relations in \eqref{eq:Them1Opt} are straightforward in view of \eqref{eq: upperThm1} and \eqref{eq: theo1}.
\end{proof}  
\begin{remark}\label{Re:MemShare}
   The lower convex of all points $\big\{\big(r, \left\lceil \frac{\sN - r}{2d} \right\rceil \big): r\in\{1,\ldots,\sN\}\big\}$ is achievable for general  $1\leq r\leq \sN$ by using memory-sharing.
\end{remark}

\begin{remark}\label{rem: code gain}
    Similar to \cite{li2017fundamental, huang2024coded}, the factor $(\sN-r)$ can be referred to as the local computation gain, which is a common additive gain arising from redundant computation in many coded computing schemes in the literature. 
    In addition, we observe a multiplicative factor of $2d$, which we refer to as the coded transmission gain. This gain is also observed in other works \cite{li2017fundamental, huang2024coded}, typically associated with redundant computation. In those works, IVs can be used to generate multicast messages desired by multiple nodes due to redundant computation. For example, the coded transmission gain in \cite{li2017fundamental} is equal to the value of the computation load, which implies that it vanishes when the computation load $r=1$. However, in our system, the coded transmission gain arises from a topology-based coding method, which is not constrained by the computation load and instead benefits from the connectivity of the nodes.
\end{remark}

\newcommand{\ceNr}{\left\lceil(\sN - r)/2\right\rceil}
\newcommand{\flxd}[1]{\left\lfloor\frac{#1}{d}\right\rfloor}
For the all-to-all computing over the ring-based networks, we obtain the achievable NCL in Theorem \ref{Theo: allunicast} and a lower bound on NCL under the cyclic placement in Theorem \ref{Theo: au_con-cyc}.
\begin{theorem}\label{Theo: allunicast}
      For the all-to-all computing system with a given computation load $r\in\{1,\ldots,\sN\}$ and a broadcast distance $d\in\{1,\ldots, \lfloor \frac{\sN}{2} \rfloor\}$ over the considered ring networks with $\sN$ nodes, the achievable NCL is given as follows:
        {\begin{align}
                &\sT_{\tn{2-cyc}}^{\tn{ach}}(r, d) = \left\{ \hspace{-5pt} \begin{array}{cc}\vspace{5pt} 
                        \frac{d}{2}\flxd{\ceNr}^2 + \frac{d}{2}\flxd{\ceNr} + \flxd{\ceNr}\left(\ceNr\right)_d + \ceNr, & \!2r-1 \leq \!d\!, \\\vspace{5pt} 
                        \frac{d}{2}\!\! \flxd{\ceNr - r + 1}^2 \!\!+\! \frac{3d}{2}\flxd{\ceNr - r + 1} \!\!+\! r \!-\! 1 \!+\! \left(\flxd{\ceNr - r + 1} \!\!+\!\! 2\right)\left(\lceil\frac{\sN -r}{2}\rceil \!-\! d\right)_d, & \!\!r\!-\!1 \!< \!d\! \leq \!2(r\!-\!1), \\
                        \frac{d}{2}\!\flxd{\ceNr}^2 \!+\! \frac{d}{2}\! \flxd{\ceNr} \!+\! \left(\flxd{\ceNr} \!+\! 1\right) \left(\ceNr \!-\! d\right)_d,  &  \!d\! \leq r\!-\!1,
                \end{array} \right.\nonumber
        \end{align}}
{where $(x)_d$ denotes $x \bmod d$ for $x \in \mathbb{Z}$, and the dominant term of $\sT_{\tn{2-cyc}}^{\tn{ach}}(r, d)$ is $\frac{\sN^2}{8d}$ for different ranges of $d$ and $r$.}
\end{theorem}
\begin{proof}
      Please see the scheme presented in Section \ref{sec: all-to-alls_general}.
\end{proof}

\begin{theorem}\label{Theo: au_con-cyc}
        {For the all-to-all computing system with a given computation load $r\in\{1,\ldots,\sN -1\}$ and a broadcast distance $d\in\{1,\ldots, \lfloor \frac{\sN}{2} \rfloor\}$ over the considered ring networks with $\sN$ nodes, a lower bound for the NCL  under the cyclic placement is given as:
        \begin{align}
                \sT_{\tn{2-cyc}}^{\star}(r,d) \geq \max_{s\in\{1, \ldots \sN\}} \frac{s(\sN - s - r + 1)}{2d},
        \end{align}
        where the right-hand side is maximized at $s = \lceil\frac{\sN-r+1}{2}\rceil$, giving the scaling of lower bound as $\frac{(\sN - r + 1)^2}{8d}$.}
\end{theorem}
\begin{proof}
        Please see the converse proof in Appendix \ref{sec: all-to-alls_lb_placement}.
\end{proof}

{To expose the asymptotic scaling more transparently (ignoring integer rounding), the achievable NCL in Theorem \ref{Theo: allunicast} can also be expressed as 
\begin{align}
        &\sT_{\tn{2-cyc}}^{\tn{ach}}(r, d) = \left\{ \begin{array}{cc}\vspace{5pt} 
                O\left( \frac{\sN}{4d}(\frac{\sN}{2}\!\!-\!\!r) \!+\! \frac{3(\sN-r)}{4} \!+\! \frac{r^2}{8d} \right), & \!2r-1 \leq \!d,\! \\\vspace{5pt} 
                \!\!\!O\!\!\left( \frac{\sN}{4d}(\frac{\sN}{2}\!\!-\!\!3r) \!+\! \frac{3\sN-5r}{4} \!+\! \frac{9r^2+4(\sN-r+1)}{8d} \right), & \!\!r\!-\!1 \!< \!d\! \leq \!2(r\!-\!1), \\
                O\left(  \frac{\sN}{4d}(\frac{\sN}{2}\!\!-\!\!r) \!+\! \frac{(\sN-r)}{4}\!+\! \frac{r^2}{8d} \right),  &  \!d\! \leq r\!-\!1. 
        \end{array} \right.\nonumber
\end{align} 
It is readily observed that the achievable NCL in Theorem \ref{Theo: allunicast} are order-optimal. Specifically, for $\sN\gg r$, the multiplicative gap is bounded as $\frac{\sT_{\tn{2-cyc}}^{\tn{ach}}(r,d)}{\sT_{\tn{2-cyc}}^{\star}(r,d)} \leq \frac{\sN^2}{(\sN - r + 1)^2} + \epsilon $, where $\epsilon \to 0$ as $\sN \to \infty$. This ratio asymptotically approaches $1$ as $\sN \to \infty$.}
Next, we further present a converse bound for arbitrary file placement when $r\leq \frac{\sN}{2}$ and $d\in\{1,\ldots, \lfloor \frac{\sN}{2} \rfloor\}$.
\begin{theorem}\label{Theo: au_con-arb}
        {For the all-to-all computing system with a given computation load $r\in\{1,\ldots,\lfloor\frac{\sN}{2}\rfloor\}$ and a broadcast distance $d\in\{1,\ldots, \lfloor \frac{\sN}{2} \rfloor\}$ over the considered ring networks with $\sN$ nodes, a lower bound for the NCL is given as:
        \begin{align}
                \sT_{\tn{2}}^{\star}(r,d) \geq \max_{s\in\{1, \ldots \sN\}} \frac{s(\sN - sr)}{2d},
        \end{align}
        where the right-hand side is maximized at $s = \lceil\frac{\sN}{2r}\rceil$ (or $\lfloor\frac{\sN}{2r}\rfloor$), giving the scaling of lower bound as $\frac{\sN^2}{8dr}$.}
\end{theorem}
\begin{proof}
        Please see the converse proof in Appendix \ref{sec: all-to-alls_lb_placement}.
\end{proof}
{The performance gap between the achievable NCL in Theorem \ref{Theo: allunicast} and the converse bound for arbitrary placement in Theorem \ref{Theo: au_con-arb} is characterized $\frac{\sT_{\tn{2-cyc}}^{\tn{ach}}(r,d)}{\sT_{\tn{2}}^{\star}(r,d)} \leq \frac{(\sN-r)^2}{\sN^2 / r} + \epsilon$, where $\epsilon \to 0$ as $\sN \to \infty$. For small values of $r$, this ratio asymptotically approaches $r$.  This gap is expected since Theorem \ref{Theo: au_con-arb} provides a general lower bound over all placements, while Theorem \ref{Theo: allunicast} is obtained under the cyclic placement.
Our results indicate that the cyclic placement is particularly competitive when the computation load is relatively small, which is reasonable in practical distributed computing applications. A figure illustrating this comparison is provided in Fig.~\ref{fig: all2all_lb_placement}.}
\begin{figure}[tbp]
        \centering
        \includegraphics[width=0.5\linewidth]{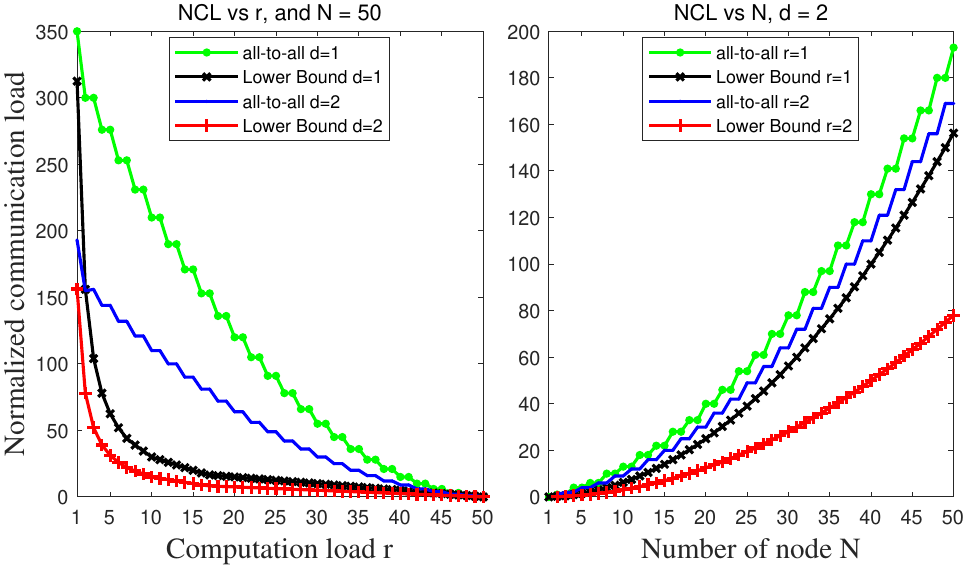}
        \caption{{Comparison of achievable NCL and lower bound of all-to-all. The achievable NCL is obtained under the cyclic placement, while the lower bound is derived for arbitrary placement.}}
        \label{fig: all2all_lb_placement}
\end{figure}
\begin{figure}[tbp]
\begin{minipage}[t]{0.48\linewidth}
        \centering
        \includegraphics[width=\linewidth]{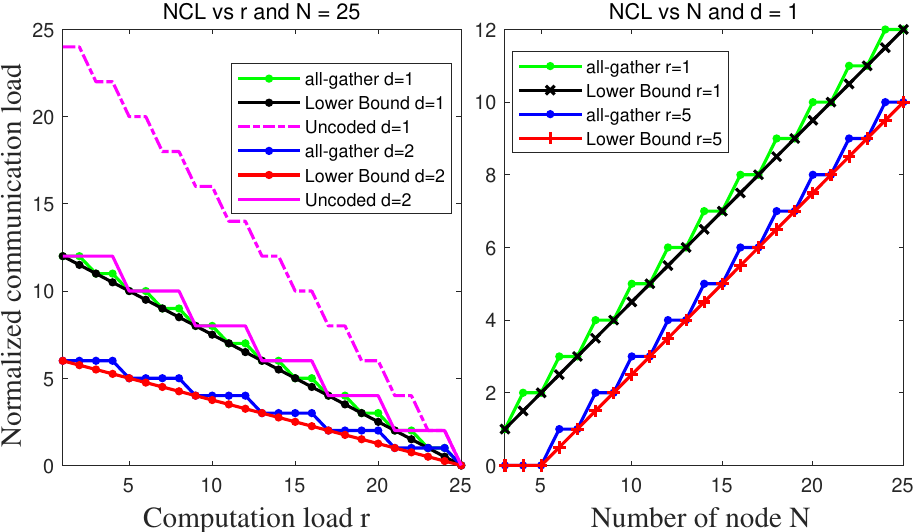}
        \caption{Comparison of achievable NCL and lower bound of all-gather.}
        \label{fig: uplow_allgather}
\end{minipage}\hspace{9pt}
\begin{minipage}[t]{0.48\linewidth}
        \centering
        \includegraphics[width = \linewidth]{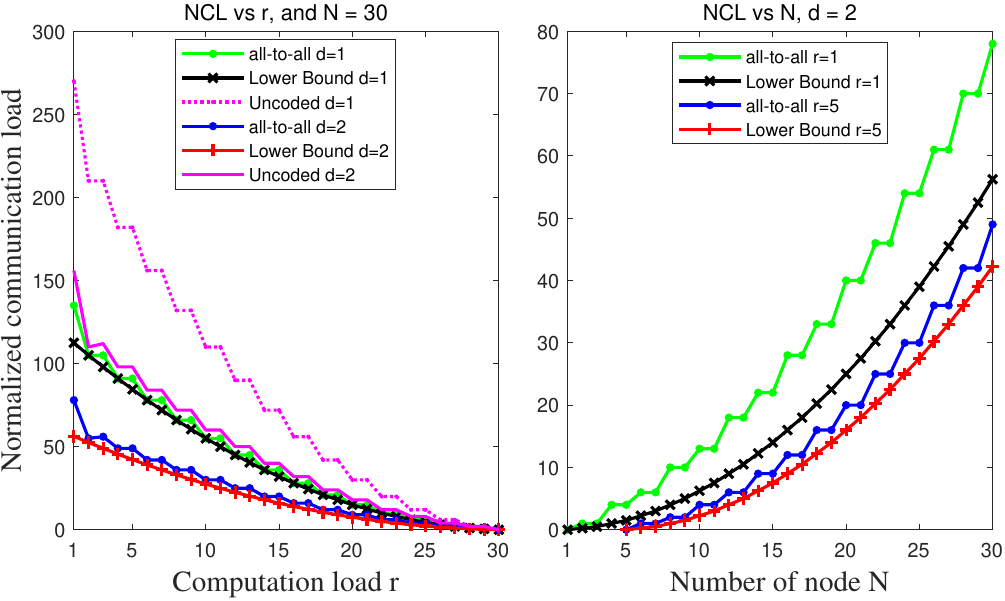}
        \caption{{Comparison of achievable NCL and lower bound of all-to-all under cyclic file placement.}}
        \label{fig: uplow_all2all}
\end{minipage}
\end{figure}

When the computation load is relatively large  $r\in\{ \lceil\frac{\sN}{2}\rceil,\ldots,\sN\}$, without constraints on data placement, an optimal result is obtained when $d=1$ as shown in the following theorem. 
 \begin{theorem}\label{Lem: au_con-arb}
        For the all-to-all computing system with a given computation load $r\in\{\lceil\frac{\sN}{2}\rceil,\ldots,\sN\}$
        and a broadcast distance $d=1$ over the considered ring network with $\sN$ nodes, the optimal NCL is given as
        \begin{align}
                \sT^{\star}_{2}(r,1)  = \frac{\sN-r}{2}. 
         \end{align} 
\end{theorem}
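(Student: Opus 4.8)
The plan is to pin down the exact value by sandwiching it between two matching bounds: a converse $\sT_2^{\star}(r,1)\ge \frac{\sN-r}{2}$ that holds for \emph{every} file placement, and a matching achievable scheme attaining $\frac{\sN-r}{2}$ for a suitably chosen placement, where the hypothesis $r\ge\lceil\sN/2\rceil$ is used only on the achievability side.

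For the converse I would isolate a single node. Since $d=1$, the only signals node $n_k$ ever receives are the broadcasts of its two immediate neighbours $n_{k-1}$ and $n_{k+1}$. Write $\hat T_j \triangleq \frac{1}{\sB_2}\sum_t l_j^{(t)}$ for the number of IV-sized symbols sent by $n_j$, so that $\sT_2=\frac{1}{\sN}\sum_j \hat T_j$. Node $n_k$ must reconstruct the $\sN-|\cM_k|$ values $\{v_i^k:w_i\notin\cM_k\}$, and its locally computable side information $\{v_i^j:w_i\in\cM_k\}$ is independent of them since the file indices are disjoint. A cut-set/entropy argument then gives $(\sN-|\cM_k|)\sB_2\le H\!\left(Y_k^{(1)},\ldots,Y_k^{(\sTe)}\right)\le \sum_t\bigl(l_{k-1}^{(t)}+l_{k+1}^{(t)}\bigr)$, i.e. $\hat T_{k-1}+\hat T_{k+1}\ge \sN-|\cM_k|$. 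Summing over $k\in[\sN]$, each $\hat T_j$ is counted exactly twice, giving $2\sum_j \hat T_j\ge \sN^2-\sum_k|\cM_k|=\sN(\sN-r)$, hence $\sT_2\ge\frac{\sN-r}{2}$ for any placement and any $r$.

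For achievability I would use $r\ge\lceil\sN/2\rceil$ to choose a placement in which, for every file $w_i$, the set of nodes \emph{not} storing $w_i$ is an independent set of the cycle $C_\sN$ (no two such nodes adjacent). This is feasible precisely because the independence number of $C_\sN$ is $\lfloor\sN/2\rfloor$, while keeping every file's storing count at least $\lceil\sN/2\rceil$ makes each non-storing set have size at most $\lfloor\sN/2\rfloor$. Under such a placement, whenever $w_i\notin\cM_k$ both neighbours $n_{k-1},n_{k+1}$ store $w_i$, so each of the $\sN-|\cM_k|$ IVs that $n_k$ lacks sits at an immediate neighbour. I would then run a reverse-carpooling schedule in which every broadcast of $n_j$ is an XOR $v_a^{j+1}\oplus v_b^{j-1}$ of one IV destined for $n_{j+1}$ and one for $n_{j-1}$; $n_{j+1}$ cancels $v_b^{j-1}$ and $n_{j-1}$ cancels $v_a^{j+1}$ using locally stored files. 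Each broadcast thereby yields one useful delivery at each neighbour, so the total number of transmissions is $\tfrac12\sum_k(\sN-|\cM_k|)=\frac{\sN(\sN-r)}{2}$, matching the converse.

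The main obstacle is the achievability bookkeeping rather than the converse. I must exhibit a concrete placement and a global pairing/schedule so that \emph{every} broadcast is simultaneously useful and decodable at both neighbours; concretely the summands must satisfy $w_a\in\cM_{j-1}\cap\cM_j$ with $w_a\notin\cM_{j+1}$ and $w_b\in\cM_j\cap\cM_{j+1}$ with $w_b\notin\cM_{j-1}$, and these pairings must exhaust every node's demand leaving no unpaired (and hence wasted) transmission. Showing that the demands balance around the ring so that no broadcast is half-empty, together with handling the parity of $\sN(\sN-r)$, is exactly where the careful combinatorial design of the placement and schedule is needed; this is the content to be deferred to the constructive scheme.
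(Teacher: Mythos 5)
Your converse is correct and coincides with the paper's own argument: for each node $n_k$, Fano's inequality together with the independence of the demanded IVs $\{v_i^k: w_i\notin\cM_k\}$ from $n_k$'s side information gives $(\sN-|\cM_k|)\sB_2 \le H\bigl(Y_k^{(1)},\ldots,Y_k^{(\sTe)}\bigr) \le \sum_t \bigl(l_{k-1}^{(t)}+l_{k+1}^{(t)}\bigr)$ up to a vanishing term, and summing over $k$ counts each $l_j^{(t)}$ exactly twice, yielding $\sT_2^{\star}(r,1)\ge\frac{\sN-r}{2}$ for any placement. This is exactly the paper's Appendix C converse.

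The gap is on the achievability side, and it is precisely the part you defer. Your placement criterion --- every file's non-storing set is an independent set of the cycle --- does not by itself make the reverse-carpooling broadcasts decodable, and in natural instances it fails outright. When $n_j$ broadcasts $v_a^{j+1}\oplus v_b^{j-1}$, the node $n_{j-1}$ must cancel $v_a^{j+1}$, i.e.\ one needs $w_a\in\cM_{j-1}$: this is a condition on the node \emph{two hops} from the demander $n_{j+1}$, whereas independence of the non-storing set of $w_a$ only forces $w_a\in\cM_j\cap\cM_{j+2}$. Concretely, for the alternating placement in which $w_i$ is missed exactly by $\{n_i,n_{i+2},\ldots,n_{i+2(\sN-r-1)}\}$ (a perfectly valid independent-set placement), the would-be canceling node is itself a non-storer for almost every demand, so nearly all broadcasts are undecodable; thus your criterion is not merely unverified but insufficient. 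What the theorem needs, and what the paper supplies, is an explicit placement with a different structural property: in the paper's period-$4$ cyclic placement (for $r=\lceil\sN/2\rceil$, node $n_i$ stores exactly the files with indices congruent to $i$ or $i+1$ modulo $4$), nodes at distance $2$ store \emph{complementary} sets of files, so every file missed by $n_{j+1}$ is automatically held by $n_{j-1}$ --- note that here each missing file is held by exactly \emph{one} neighbor, contrary to your design goal of both. Each node's $\sN-r$ demands then split evenly between its two neighbors, and the broadcasts $v_{i+4(t-1)}^{i+1}\oplus v_{i+1+4(t-1)}^{i-1}$ are all full and decodable, giving $\bigl\lceil\frac{\sN-r}{2}\bigr\rceil$ transmissions per node (your parity concern is real but minor: for odd $\sN-r$ one splits IVs into complementary halves, as the paper does in its Section V scheme). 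Since the converse here is routine, this construction is the real content of the theorem, and your proof does not contain it.
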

\begin{proof}
  To achieve the optimal NCL,  we assign files so that the IVs are near the desired nodes, rather than adhering to the cyclic placement. Please see the proposed scheme and converse proof in  Appendix \ref{sec: allunicast-arb}.
\end{proof}

\begin{remark}
        A lower bound when $d \geq 2$ and $r\in\{\lceil\frac{\sN}{2}\rceil,\ldots,\sN\}$ can be also derived as $\frac{\sN - r}{2d} \leq \sT^{\star}_{2}(r,d)$ using a proof similar to that in Theorem \ref{Lem: au_con-arb}. However, it is an open problem to check whether it is tight. 
\end{remark}

Similar to the observations in Remark \ref{rem: code gain}, NCL demonstrates that the redundant computation provides an additive gain, while the connectivity of the nodes contributes a multiplicative gain.
\begin{remark}(Uncoded Schemes)\label{rem: Uncoded}
   Follow the same steps in the coded schemes in Sections \ref{sec: alltoall} and \ref{sec: allunicast}, but letting each node directly forward each intermediate value instead of generating coded messages, the NCL of the uncoded schemes for all-gather and all-to-all are $\sT^{\tn{uncoded}}_{1}(r,d) = 2\sT_{1}(r,d)$ and $\sT^{\tn{uncoded}}_{\tn{2-cyc}}(r,d) = 2\sT^{\tn{ach}}_{\tn{2-cyc}}(r,d),$ respectively. {This factor-of-two gap arises because local broadcasts without coding cannot simultaneously satisfy distinct demands flowing in opposite directions.}
   {This indicates that ring topology would constrain the multicast opportunities, i.e., the proposed coded scheme can at most offer a multiplicative factor of $2$ of benefit compared with the uncoded scheme. Notably, prior work in network coding has shown that coding gain is bounded by a factor of $2$ for the information transmission in various undirected network model \cite{li2009constant,kiraly2008approximate}, while they do not consider the redundant computation (which can be modeled by allowing the same source messages to be sent from multiple nodes) and assume the network only has a single communication session. }
\end{remark}

Fig.~\ref{fig: uplow_allgather} compares the NCLs of the proposed scheme, the uncoded scheme under the cyclic placement, and the lower bound {(under any data placement)} for the all-gather problem when $r$ or $\sN$ changes. It can be seen that our scheme can significantly reduce the communication load compared to the uncoded scheme. The zigzag shape of the achievable NCL is due to the operation of the ceil in \eqref{eq: upperThm1}.

The left panel of Fig.~\ref{fig: uplow_all2all} compares the NCLs versu $r$ for the all-to-all problem with $\sN=50$ and $d=1,2$, including the NCLs of the proposed scheme in Theorem \ref{Theo: allunicast}, the uncoded scheme in Remark \ref{rem: Uncoded}, and the lower bounds in Theorem \ref{Theo: au_con-cyc} and Theorem \ref{Lem: au_con-arb}.  It can be seen that our scheme can significantly reduce the communication load compared to the uncoded scheme. 
The right panel of Fig.~\ref{fig: uplow_all2all} compares the NCL of the proposed scheme and the lower bound as $\sN$ varies for $r=1$ and $r=5$.
It can be seen that the proposed scheme is asymptotically optimal when $\sN$ is relatively large compared to the computation load $r$.
Additionally, the NCL approaches the lower bound when $r$ is close to $\sN$. 
For both all-to-all and all-gather, the NCLs of the proposed scheme decrease with $r$ and $d$. 

\begin{remark}({Advantages} of Parallelism)
    An advantage of our proposed schemes is their potential for parallelism. For example, when $d=1$, the nodes in $\{n_{i}, n_{i+2}, n_{i+4}, \ldots \}$ can transmit messages concurrently during a given time step without interfering with each other, and the nodes in $\{n_{i+1}, n_{i+3}, n_{i+5}, \ldots \}$ receive their intended messages. In the next step, the roles of the sending and receiving nodes are reversed to send other messages. This enables a more efficient communication bandwidth utilization, compared to the work in~\cite{li2017fundamental} where nodes contend for access to a shared communication link.
\end{remark}

\section{Coded Transmission for All-gather Computing}
\label{sec: alltoall}
This section focuses on the all-gather computing system where each node $n_i,i\in[\sN]$ requires all IVs. 
To enhance communication efficiency, the IVs are propagated to all nodes through reverse carpooling, after which locally generated IVs are used to decode the desired ones. In certain special cases, successive decoding is employed during the decoding process.
We first present an illustrative example, followed by a description of our general scheme. 
For brevity, unless otherwise specified, the notation of operation \{$\bmod\ \sN$\} for the indices of nodes and files will be omitted from here on.

\subsection{An Illustrative Example}
\label{sec: allgather-example}
Consider a ring network with $\sN = 8$ nodes, computation load $r = 2$, and broadcast distance $d=3$. 
The nodes are responsible for computing tasks on the $\sN=8$ given input files. Each node locally maps $2$ files, computing one IV from each mapped file. 
The input file is initially cached at the nodes based on a cyclic placement, where file $w_i$ is stored by nodes $\{n_i, n_{i-1}\}$. This corresponds to the placement of IVs shown in Fig.~\ref{fig: NewExample n=8, r=2, d=3}. 
Without loss of generality, we can treat each file $V_i$, $i\in[\sN]$ as a single symbol.

\begin{figure}[t]
        \centering
        \includegraphics[width=0.3\linewidth]{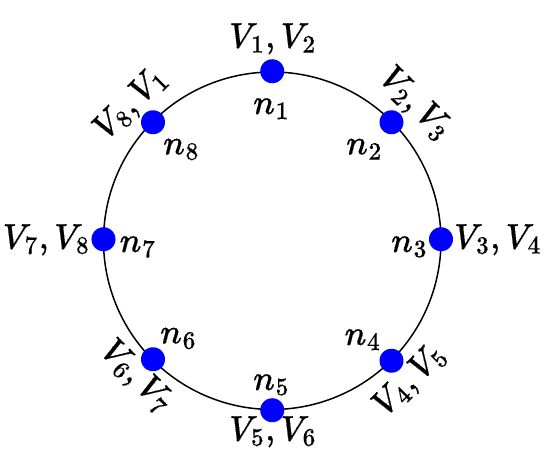}
        \caption{The IVs placement over a ring network with $\sN=8$ nodes, computation load $r=2$ and broadcast distance $d=3$.}
        \label{fig: NewExample n=8, r=2, d=3}
\end{figure}

In this example, a single broadcast per node is sufficient for each node to obtain all the desired IVs. For each node $n_i, i\in[8]$, the encoded symbol $V_i \oplus V_{i+1}$ is broadcast to the nodes $\{n_{i-3}, n_{i-2}, n_{i-1}, n_{i+1}, n_{i+2}, n_{i+3}\}$. Obviously, the NCL is $\sT_{1} = 1$.
Meanwhile, node $n_i$ can receive $6$ encoded symbols from these nodes, as follows: 
\begin{align}
        &V_{i-3} \oplus V_{i-2},& & V_{i-2} \oplus V_{i-1},&&V_{i-1} \oplus V_{i},& \nonumber\\
        & V_{i+1} \oplus V_{i+2},&&V_{i+2} \oplus V_{i+3}, && V_{i+3} \oplus V_{i+4}.& \nonumber
\end{align}
For example, node $1$ broadcasts the encoded symbol $V_1 \oplus V_2$ to the nodes $\{n_6, n_7, n_8, n_2, n_3, n_4\}$, and receives the following encoded symbols:
\begin{align}
        &V_{6} \oplus V_{7},& & V_{7} \oplus V_{8},& &V_{8} \oplus V_{1},&\nonumber\\
        &V_{2} \oplus V_{3},& & V_{3} \oplus V_{4},& &V_{4} \oplus V_{5}.& \nonumber
\end{align}
Obviously, node $n_i$ can decode $V_{i-1}$ and $V_{i+2}$ from $V_{i-1} \oplus V_{i}$ and $V_{i+1} \oplus V_{i+2}$, as it already caches $V_i$ and $V_{i+1}$. Then, using the decoded symbols $V_{i-1}$ and $V_{i+2}$, node $n_i$ can then proceed to decode new symbols $V_{i-2}$ and $V_{i+3}$. This successive decoding process continues further for $V_{i-3}$ and $V_{i+4}$.
In other words, each node decodes symbols sequentially, starting from those sent by nearby nodes and progressing to those from farther nodes.
The decoding order is presented in Table~\ref{tab:decoding oreder example}.  The symbol `$*$' at  row $n_i$ and column $V_j$, where $i\in[8]$ and $j\in[8]$, indicates that  node $n_i$ generates (and thus has) the IV $V_j$. In row $n_i$,  integer $1$ indicates that the IV in the corresponding column can be directly decoded by node $n_i$ upon receiving the encoded symbol broadcast by its neighboring nodes. The integer $2$ signifies that decoding the corresponding IV depends on the symbol associated with the nearest integer $1$ in the same row. Similarly, the integer $3$ denotes that decoding the corresponding IV relies on the symbol associated with the nearest integer $2$ in the same row.
\begin{table}[tbp]
        \caption{Decoding order for $\sN=8$, $r=2$ and $d=3$ \label{tab:decoding oreder example}}
        \centering
        \begin{tabular}{|c|c|c|c|c|c|c|c|c|}
        \hline
          & $V_1$ & $V_2$ & $V_3$ & $V_4$ & $V_5$ & $V_6$ & $V_7$ & $V_8$ \\
        \hline
        $n_1$ & $*$ & $*$ & \bl{1} & \bl{2} & \bl{3} & \bl{3} & \bl{2} & \bl{1} \\
        \hline
        $n_2$ & \bl{1} & $*$ & $*$ & \bl{1} & \bl{2} & \bl{3} & \bl{3} & \bl{2} \\
        \hline
        $n_3$ & \bl{2} & \bl{1} & $*$ & $*$  & \bl{1} & \bl{2} & \bl{3} & \bl{3} \\
        \hline
        $n_4$ & \bl{3} & \bl{2} & \bl{1} & $*$ & $*$  & \bl{1} & \bl{2} & \bl{3} \\
        \hline
        $n_5$ & \bl{3} & \bl{3} & \bl{2} & \bl{1} & $*$ & $*$  & \bl{1} & \bl{2} \\
        \hline
        $n_6$ & \bl{2} & \bl{3} & \bl{3} & \bl{2} & \bl{1} & $*$ & $*$  & \bl{1} \\
        \hline
        $n_7$ & \bl{1} & \bl{2} & \bl{3} & \bl{3} & \bl{2} & \bl{1} & $*$ & $*$  \\
        \hline
        $n_8$ & $*$  & \bl{1} & \bl{2} & \bl{3} & \bl{3} & \bl{2} & \bl{1} &  $*$ \\
        \hline
        \end{tabular}
\end{table}

\subsection{General Scheme}
\label{sec: all-gather_general}
Consider the integer-valued computation load $r\in\{1,\ldots, \sN\}$ and broadcast distance $d\in\{1,\ldots, \lfloor\frac{\sN}{2}\rfloor \}$.
Without loss of generality, we treat the IVs mapped from input files as individual symbols. When $r=\sN$, each node knows all the desired IVs, thus no communication is needed, resulting in $\sTnc(r=\sN, d) = 0$ for all $d\in\{1,\ldots, \lfloor\frac{\sN}{2}\rfloor \}$. Next, we focus on the case where $r< \sN$. 

The input files are initially cached by the nodes based on a cyclic placement along the ring network. 
Specifically, node $n_i$ caches input files $w_j$ if $i \leq j \leq i+r-1$, i.e.,
\begin{align}\label{eq: cyclic file}
        \cM_i = \{w_i, \ldots, w_{i+r-1}\}.
\end{align}
After performing the local mapping, node $n_i$ knows the IVs corresponding to the files in $\cM_i$, i.e., $\{V_n, w_n\in\cM_i\}$.

At time clock $1$, node $n_i$ generates the encoded symbol 
\begin{align*}
        X_{i}^{(1)} = V_i \oplus V_{i+r-1},
\end{align*}
and broadcasts it to the nodes within a distance of $d$; thus the set of nodes $\{n_{i-d}, \ldots, n_{i-1}, n_{i+1}, \ldots, n_{i+d} \}$ can receive $X_{i}^{(1)}$. Meanwhile, node $n_i$ can receive the encoded symbols sent by these nodes, specifically
\begin{align*}
        Y_{i_{-}}^{(1)}  &=  \left(X_{i-d}^{(1)} \!=\! V_{i-d} \oplus\! V_{i-d+r-1}, \ldots, X_{i-1}^{(1)} \!=\! V_{i-1} \oplus\! V_{i+r-2} \right)\nonumber
\end{align*}
\begin{align*}
        Y_{i_{+}}^{(1)} & = \left( X_{i+1}^{(1)} \!=\! V_{i+1} \oplus\! V_{i+r}, \ldots, X_{i+d}^{(1)} \!=\! V_{i+d} \oplus\! V_{i+d+r-1} \right),\nonumber
\end{align*}
where $Y_{i_{-}}^{(1)}$  and $Y_{i_{+}}^{(1)}$ denote the collection of the messages transmitted by neighboring nodes $\{n_{i-d}, \ldots, n_{i-1}\}$ and $\{n_{i+1}, \ldots, n_{i+d}\}$ at clock $1$, respectively.

We now present the decoding process at time clock $1$. If $ d \leq r-1 $, we observe that the components $ \left(V_{i-d+r-1}, \ldots, V_{i+r-2}\right) $ from $ Y_{i_{-}}^{(1)} $ and the components $ \left(V_{i+1}, \ldots, V_{i+d}\right) $ from $ Y_{i_{+}}^{(1)} $ are already known by node $ n_i $.
Thus, node $n_i$ can successfully decode $2d$ desired symbols as follows,
\begin{align*}
        (V_{i-d}, \ldots, V_{i-1}, V_{i+r}, V_{i+r+d-1}).
\end{align*}
If $d\geq r$, node $n_i$ can also decode $2d$ desired symbols but with successive decoding.
Taking $Y_{i_{-}}^{(1)}$ as an example (illustrated in Fig.~\ref{fig: Round1-Successive Decoding}), the same color on the two sides of the summations means the different coded symbols share components. Since node $n_i$ has already known the right-hand side of the first $r-1$ summations (i.e., $V_{i+r-2},\ldots, V_{i}$), it can decode their left-hand side. These decoded symbols then serve as known components for the next $r-1$ summations (i.e., the $r^{\textnormal{th}}$ to $(2r-1)^{\textnormal{th}}$ summations from the bottom), allowing further decoding in a successive manner. Following this process, node $n_i$ successively decodes $2d$ desired symbols 
$$\left(V_{i-1}, \ldots, V_{i-d}\right) \text{ and } \left(V_{i+r}, \ldots, V_{i+d+r-1}\right)$$
from $Y_{i_{-}}^{(1)}$ and $Y_{i_{+}}^{(1)}$, respectively.
\begin{figure}[tbp]
        \centering
        \includegraphics[width=0.2\linewidth]{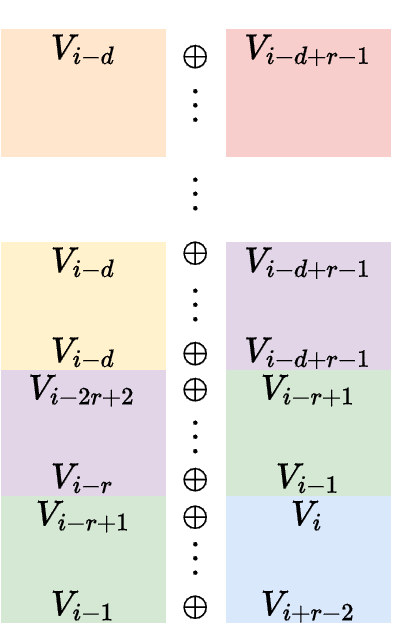}
        \caption{Successive decoding of $Y_{i_{-}}^{(1)}$ at time $1$.  The same color on the two sides of the summations means the same component, but constructing different encoded symbols.}
        \label{fig: Round1-Successive Decoding}
\end{figure}

We now describe the transmission process from time clock $k\geq 2$. After time $k-1$, node $n_i$ has successfully decoded and stored the following symbols, 
$$V_{i-d(k-1)}, \ldots, V_{i-1}, V_{i}, V_{i+1}, \ldots, V_{i+d(k-1)+r-1}. $$ 
At time clock $k$, the symbols coming from the opposite sides, $V_{i-d(k-1)}$ and $V_{i+d(k-1)+r-1}$, are mixed to generate the coded symbols as
\begin{align*}
        X_{i}^{(k)} = V_{i-d(k-1)} \oplus V_{i+d(k-1)+r-1}.
\end{align*}
$X_{i}^{(k)}$ is then broadcast to nodes $\{n_{i-d}, \ldots, n_{i-1}, n_{i+1}, \ldots, n_{i+d} \}$, where half of these nodes have already known $V_{i-d(k-1)}$ and the other half have known $V_{i+d(k-1)+r-1}$, forming the reverse carpooling structure. 
Meanwhile, node $n_i$ can receive the coded symbols sent from these nodes, specifically
\begin{align*}
    & Y_{i_{-}}^{(k)} = \left( X_{i-d}^{(k)} \!=\! V_{i-dk} \oplus\! V_{i+r-1+d(k-2)}, \ldots,  X_{i-1}^{(k)} \!=\! V_{i-1-d(k-1)} \oplus\! V_{i+r-2+d(k-1)} \right)
\end{align*}
and 
\begin{align*}
    & Y_{i_{+}}^{(k)} = \left( X_{i+1}^{(k)} \!=\! V_{i+1-d(k-1)} \oplus\! V_{i+r+d(k-1)}, \ldots,  X_{i+d}^{(k)} \!=\! V_{i - d(k-2)} \oplus\! V_{i+r-1+dk} \right),
\end{align*}
where $Y_{i_{-}}^{(k)}$  and $Y_{i_{+}}^{(k)}$ denote the collection of the messages transmitted by neighboring nodes $\{n_{i-d}, \ldots, n_{i-1}\}$ and $\{n_{i+1}, \ldots, n_{i+d}\}$ at clock $k$, respectively.
The successive decoding is no longer required here. In $Y_{i_{-}}^{(k)}$, the right-hand side of each summation is the content already stored by node $n_i$, and the left-hand side of each summation is the desired symbols. Hence, node $n_i$ can successfully decode the following symbols 
$$ V_{i-dk}, \ldots,  V_{i-1-d(k-1)}.$$
Symmetrically, in $Y_{i_{+}}^{(k)}$, the left-hand side is the stored symbols, while the right-hand side is the desired ones, allowing node $n_i$ to decode 
$$ V_{i+r+d(k-1)}, \ldots,  V_{i+r-1+dk}.$$
At the end of time clock $k$, node $n_i$ obtains $2d$ new symbols. The symbols that are decoded and stored by node $n_i$ are
$$V_{i-dk}, \ldots, V_{i-1}, V_{i}, V_{i+1}, \ldots, V_{i+dk+r-1}. $$ 
The process continues in subsequent times until each node receives all the desired symbols.

Since each node can obtain at most $2d$ new symbols through one transmission, the algorithm completes in time $\sTe = \lceil \frac{\sN - r}{2d} \rceil $. 
The ceiling operation accounts for the final round, where the number of remaining symbols for each node may be less than $2d$, yet every node still performs one transmission.
Thus, the NCL is given by 
\begin{align}\label{eq: general NCL}
        \sT_{1}(r, d) = \frac{ \lceil \frac{\sN - r}{2d} \rceil \sN \sB_{1}}{\sN \sB_{1}} = \left\lceil \frac{\sN - r}{2d} \right\rceil.
\end{align}

\section{Coded Transmission for All-to-all Computing}\label{sec: allunicast}
This section focuses on the all-to-all problem where each node $n_i, i\in[\sN]$, requires a distinct set of IVs. Instead of repeating our all-gather scheme multiple rounds, we delicately deliver the IVs according to their distance to the intended nodes. In this organized manner, the IVs efficiently reach the intended node.
We first present an illustrative example, followed by a description of our general scheme. 

For clarity, we redefine the notation for transmitted messages. The message $X_{i}^{(t)}$, transmitted by node $n_i$ at time slot $t$, is now written as $X_{i}^{(j,k)}$. Here, $X_{i}^{(j,k)}$ represents the message transmitted by node $n_i$ at the $k^{\textnormal{th}}$ step of round $j$, where $0 \leq k \leq j$ and $t = 1 + 2 + \ldots + (j-1) + k$. Similarly, we re-write $Y_i^{(t)}$ as $Y_i^{(j,k)}$, maintaining the same relationship $0\leq k\leq j$ and $1+ 2 + \ldots + (j-1) + k = t$.

\subsection{An Illustrative Example}\label{sec: MUexa}

Consider a ring network with $\sN = 8$ nodes, computation load $r = 3$, and broadcast distance $d=1$.  As shown in Fig.~\ref{fig: MultiUnicast n=8, r=3, d=1}, the input files are cached by nodes based on a cyclic placement, i.e., files $w_i$ are stored by nodes $\{n_i, n_{i-1}, n_{i-2}\}$. 

\begin{figure}[tbp]
        \centering
        \includegraphics[width=0.35\linewidth]{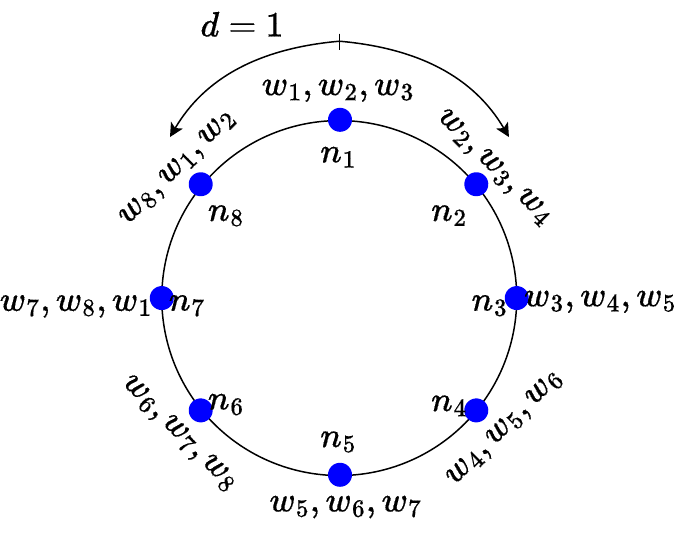}
        \caption{A file placement over a ring network with $\sN=8$ nodes, computation load $r=3$ and broadcast distance $d=1$.}
        \label{fig: MultiUnicast n=8, r=3, d=1}
\end{figure}

Among the desired IVs of node $n_i$, the packets $v_{i}^{i}, v_{i+1}^{i}, v_{i+2}^{i}$ are already known. The remaining packets are generated by different subsets of nodes, each located at varying shortest distances from node $n_i$
We denote the set of packets cached by node $j$ and having the shortest distance of $l$ to their intended nodes as {$W_j^{(l)} = \{v_k^i: k\in\cM_j, z\in[\sN], k\in\cM_z, |i-j|=l,  |i-z| \geq l\}$}. 
For instance, at node $n_5$, we have $W_5^{(1)}=\{v_5^{6}, v_7^{4}\}$, $W_5^{(2)}=\{v_5^7, v_7^3\}$ and $W_5^{(3)}=\{v_5^8, v_7^2\}$.
We divide the transmission process into rounds with varying steps, where packets of different distances are sent in separate rounds. The coded packets broadcast by node $i$ in $k^{\textnormal{th}}$ step of round $j$ are denoted as $X_i^{(j,k)}$.

In the first round, each node transmits packets with distance $1$ to their destination nodes, i.e., $W_i^{(1)}$, $i\in[\sN]$. The encoding and transmission process is presented in Table~\ref{tab: MultiUnicast n=8r=3d=1round=1step=1}. For each node $i$, the coded packet $X_i^{(1,1)} = v_i^{i+1} \oplus v_{i+2}^{i-1} $ is broadcast to neighbor nodes $n_{i-1}$ and $n_{i+1}$. For example, node $n_1$ broadcasts $X_1^{(1,1)} = v_1^2 \oplus v_3^8$,   it also receives the encoded packets $X_2^{(1,1)} = v_2^3 \oplus v_4^1$ and $X_8^{(1,1)} = v_8^1 \oplus v_2^7$ which are broadcast by node $n_2$ and $n_8$, respectively. Since node $n_1$ has cached $v_2^7$ and $v_2^3$, it can decode the desired packets $v_8^1$ and $v_4^1$ by subtracting the known values from $X_2^{(1,1)} = v_2^3 \oplus v_4^1$ and $X_8^{(1,1)} = v_8^1 \oplus v_2^7$, respectively. The same decoding process applies to all other nodes.

In the second round, it will take $2$ steps of each node to transmit the packets that have a distance of $2$, i.e., $W_i^{(2)}$, $i\in[\sN]$. The encoding and transmission process is shown in Table~\ref{tab: MultiUnicast n=8r=3d=1round=2step=12}. For each node, the encoded packets of the first broadcast are generated from the packets in $W_i^{(2)}$. After receiving the coded packets from the broadcast in the first step, each node eliminates the cached content and re-encodes the remaining packets for the broadcast in the second step. For example, node $n_1$ received $X_8^{(2,1)}= v_8^2 \oplus v_2^6$ from node $n_8$ and $X_2^{(2,1)} = v_2^4 \oplus v_4^8$ from node $n_2$ in the first step. Node $n_1$ can decode $v_8^2$ and $v_4^8$ since it has cached $v_2^6$ and $v_2^4$. Then $n_1$ generates a new coded packet $X_1^{(2,2)} = v_8^2 \oplus v_4^8$ and broadcasts it in the second step.

The third round requires $3$ steps to send the packets $W_i^{(3)}$, $i\in[\sN]$, as shown in Tables~\ref{tab: MultiUnicast n=8r=3d=1round=3step=12}. The coding process follows the same principle as in previous rounds. However, we can observe that in this case, each packet in $W_i^{(3)}$ will be sent by two nodes. For example, $v_1^4$ appears in both coded packets $X_1^{(3,1)}=v_1^4\oplus v_3^6$ and $X_7^{(3,1)} = v_7^2\oplus v_1^4$, broadcast by $n_1$ and $n_7$, respectively. Therefore, the two nodes can each send half of the packets to save on transmission costs.
Finally, the NCL can be calculated as $\sT_{2} = 1+ 2 + 2.5 = 5.5$.
In this example, the decoding at each step relies solely on initially cached content. However, as we will see in the next section, from the third step onward, decoding also depends on the undesired symbols decoded in the previous two steps.

\begin{table}[tbp]
        \begin{minipage}[t]{0.45\linewidth}
                \centering
                \caption{Round $1$: Step $1$.}
                \label{tab: MultiUnicast n=8r=3d=1round=1step=1}
                \renewcommand\arraystretch{1.3}
                \setlength{\tabcolsep}{3pt}{
                \begin{tabular}{|c|c|c|c|}
                        \hline
                        Round $1$ & \multicolumn{3}{c|}{Step $1$}  \\
                        \hline
                        Node & Broadcast & \multicolumn{2}{c|}{Receive}  \\
                        \hline
                        $n_1$ & $v_1^2 \oplus v_3^8$ & $v_8^1 \oplus v_2^7$ & $v_2^3 \oplus v_4^1$ \\
                        \hline
                        $n_2$ & $v_2^3 \oplus v_4^1$ & $v_1^2 \oplus v_3^8$ & $v_3^4 \oplus v_5^2$  \\
                        \hline
                        $n_3$ & $v_3^4 \oplus v_5^2$ & $v_2^3 \oplus v_4^1$ & $v_4^5 \oplus v_6^3$  \\
                        \hline
                        $n_4$ & $v_4^5 \oplus v_6^3$ & $v_3^4 \oplus v_5^2$ & $v_5^6 \oplus v_7^4$  \\
                        \hline
                        $n_5$ & $v_5^6 \oplus v_7^4$ & $v_4^5 \oplus v_6^3$ & $v_6^7 \oplus v_8^5$  \\
                        \hline
                        $n_6$ & $v_6^7 \oplus v_8^5$ & $v_5^6 \oplus v_7^4$ & $v_7^8 \oplus v_1^6$  \\
                        \hline
                        $n_7$ & $v_7^8 \oplus v_1^6$ & $v_6^7 \oplus v_8^5$ & $v_8^1 \oplus v_2^7$  \\
                        \hline
                        $n_8$ & $v_8^1 \oplus v_2^7$ & $v_7^8 \oplus v_1^6$ & $v_1^2 \oplus v_3^8$ \\
                        \hline
                \end{tabular}}
        \end{minipage} \hspace{9pt}
        \begin{minipage}[t]{0.45\linewidth}
                \centering
                \caption{Round $2$: Step $1$ and Step $2$.}
                \label{tab: MultiUnicast n=8r=3d=1round=2step=12}
                \renewcommand\arraystretch{1.3}
                \setlength{\tabcolsep}{3pt}{
                \begin{tabular}{|c|c|c|c|c|c|c|}
                        \hline
                        Round $2$ & \multicolumn{3}{c|}{Step $1$} & \multicolumn{3}{c|}{Step $2$}  \\
                        \hline
                        Node & Broadcast & \multicolumn{2}{c|}{Receive} & Broadcast & \multicolumn{2}{c|}{Receive}  \\
                        \hline
                        $n_1$ & $v_1^3 \oplus v_3^7$ & $v_8^2 \oplus v_2^6$ & $v_2^4 \oplus v_4^8$ & $v_8^2 \oplus v_4^8$ & $v_7^1 \oplus v_3^7$ & $v_1^3 \oplus v_5^1$  \\
                        \hline
                        $n_2$ & $v_2^4 \oplus v_4^8$ & $v_1^3 \oplus  v_3^7$ & $ v_3^5 \oplus v_5^1$ & $v_1^3 \oplus v_5^1$ & $v_8^2 \oplus v_4^8$ & $v_2^4 \oplus  v_6^2$ \\
                        \hline
                        $n_3$ & $v_3^5 \oplus v_5^1$ & $v_2^4 \oplus  v_4^8$ & $ v_4^6 \oplus v_6^2$ & $v_2^4 \oplus  v_6^2$ & $v_1^3 \oplus v_5^1$ & $v_3^5 \oplus v_7^3$ \\
                        \hline
                        $n_4$ & $v_4^6 \oplus v_6^2$ & $v_3^5 \oplus v_5^1$ & $v_5^7 \oplus v_7^3$ & $v_3^5 \oplus v_7^3$ & $v_2^4 \oplus  v_6^2$ & $v_4^6 \oplus v_8^4$ \\
                        \hline
                        $n_5$ & $v_5^7 \oplus v_7^3$ & $v_4^6 \oplus v_6^2$ & $v_6^8 \oplus v_8^4$ & $v_4^6 \oplus v_8^4$ & $v_3^5 \oplus v_7^3$ & $v_5^7 \oplus v_1^5$ \\
                        \hline
                        $n_6$ & $v_6^8 \oplus v_8^4$ & $v_5^7 \oplus v_7^3$ & $v_7^1 \oplus v_1^5$ & $v_5^7 \oplus v_1^5$ & $v_4^6 \oplus v_8^4$ & $v_6^8 \oplus v_2^6$ \\
                        \hline
                        $n_7$ & $v_7^1 \oplus v_1^5$ & $v_6^8 \oplus v_8^4$ & $v_8^2 \oplus v_2^6$ & $v_6^8 \oplus v_2^6$ & $v_5^7 \oplus v_1^5$ & $v_7^1 \oplus v_3^7$ \\
                        \hline
                        $n_8$ & $v_8^2 \oplus v_2^6$ & $v_7^1 \oplus v_1^5$ & $v_1^3 \oplus v_3^7$ & $v_7^1 \oplus v_3^7$ & $v_6^8 \oplus v_2^6$ & $v_8^2 \oplus v_4^8$ \\
                        \hline
                \end{tabular} }
        \end{minipage}
\end{table}

\begin{table}[tbp]
        \centering
        \caption{Round $3$: Step $1$, step $2$ and Step $3$.}
        \label{tab: MultiUnicast n=8r=3d=1round=3step=12}
        \renewcommand\arraystretch{1.3}
        \setlength{\tabcolsep}{3pt}{
        \begin{tabular}{|c|c|c|c|c|c|c|c|c|c|}
                \hline
                Round $3$ & \multicolumn{3}{c|}{Step $1$} & \multicolumn{3}{c|}{Step $2$} & \multicolumn{3}{c|}{Step $3$} \\
                \hline
                Node & Broadcast & \multicolumn{2}{c|}{Receive} & Broadcast & \multicolumn{2}{c|}{Receive} & Broadcast & \multicolumn{2}{c|}{Receive} \\
                \hline
                $n_1$ & $v_1^4 \oplus v_3^6$ & $v_8^3 \oplus v_2^5$ & $v_2^5 \oplus v_4^7$ & $v_8^3 \oplus v_4^7$ & $v_7^2 \oplus v_3^6$ & $v_1^4 \oplus v_5^8$ & $v_7^2 \oplus v_5^8$ & $v_6^1 \oplus v_4^7$ & $v_8^3 \oplus v_6^1$  \\
                \hline
                $n_2$ & $v_2^5 \oplus v_4^7$ & $v_1^4 \oplus v_3^6$ & $v_3^6 \oplus v_5^8$ & $v_1^4 \oplus v_5^8$ & $v_8^3 \oplus v_4^7$ & $v_2^5 \oplus  v_6^1$  & $v_8^3 \oplus v_6^1$ & $v_7^2 \oplus v_5^8$ & $v_1^4 \oplus v_7^2$ \\
                \hline
                $n_3$ & $v_3^6 \oplus v_5^8$ & $v_2^5 \oplus v_4^7$ & $v_4^7 \oplus v_6^1$ & $v_2^5 \oplus v_6^1$ & $v_1^4 \oplus v_5^8$ & $v_3^6 \oplus v_7^2$ & $v_1^4 \oplus v_7^2$ & $v_8^3 \oplus v_6^1$ & $v_2^5 \oplus v_8^3$ \\
                \hline
                $n_4$ & $v_4^7 \oplus v_6^1$ & $v_3^6 \oplus v_5^8$ & $v_5^8 \oplus v_7^2$ & $v_3^6 \oplus v_7^2$ & $v_2^5 \oplus  v_6^1$ & $v_4^7 \oplus v_8^3$ & $v_2^5 \oplus v_8^3$ & $v_1^4 \oplus v_7^2$ & $v_3^6 \oplus v_1^4$ \\
                \hline
                $n_5$ & $v_5^8 \oplus v_7^2$ & $v_4^7 \oplus v_6^1$ & $v_6^1 \oplus v_8^3$ & $v_4^7 \oplus v_8^3$ & $v_3^6 \oplus v_7^2$ & $v_5^8 \oplus v_1^4$ & $v_3^6 \oplus v_1^4$ & $v_2^5 \oplus v_8^3$ & $v_4^7 \oplus v_2^5$ \\
                \hline
                $n_6$ & $v_6^1 \oplus v_8^3$ & $v_5^8 \oplus v_7^2$ & $v_7^2 \oplus v_1^4$ & $v_5^8 \oplus v_1^4$ & $v_4^7 \oplus v_8^3$ & $v_6^1 \oplus v_2^5$ & $v_4^7 \oplus v_2^5$ & $v_3^6 \oplus v_1^4$ & $v_5^8 \oplus v_3^6$ \\
                \hline
                $n_7$ & $v_7^2 \oplus v_1^4$ & $v_6^1 \oplus v_8^3$ & $v_8^3 \oplus v_2^5$ & $v_6^1 \oplus v_2^5$ & $v_5^8 \oplus v_1^4$ & $v_7^2 \oplus v_3^6$ & $v_5^8 \oplus v_3^6$ & $v_4^7 \oplus v_2^5$ & $v_6^1 \oplus v_4^7$ \\
                \hline
                $n_8$ & $v_8^3 \oplus v_2^5$ & $v_7^2 \oplus v_1^4$ & $v_1^4 \oplus v_3^6$ & $v_7^2 \oplus v_3^6$ & $v_6^1 \oplus v_2^5$ & $v_8^3 \oplus v_4^7$ & $v_6^1 \oplus v_4^7$ & $v_5^8 \oplus v_3^6$ & $v_7^2 \oplus v_5^8$ \\
                \hline
        \end{tabular}}
\end{table}

\subsection{General Scheme}
\label{sec: all-to-alls_general}
Initially, the input files are cached by the nodes based on the cyclic placement along the ring network. Specifically, node $n_i$ caches input files $w_j$ if $i \leq j \leq i+r-1$, i.e.,
\begin{align}
        \cM_i = \{w_i, \ldots, w_{i+r-1}\}.\nonumber
\end{align}
After the computation of local mapping, node $n_i$ knows the IVs computed from the files in $\cM_i$, i.e., $\{v_n^j: w_n\in\cM_i, j\in[\sN]\}$.
We first discuss the cases with the computation load $r\in\{2,\ldots, \sN-1\}$ and broadcast distance $d=1$.
Subsequently, we describe modifications to the scheme for cases with $d \geq 2$ or $r=1$. 

\subsubsection{Proposed scheme for $d=1$} 
For $r\in\{2,\ldots, \sN-1\}$ and $d=1$, the IVs are carefully delivered based on their distance to the intended node, recalling that 
\begin{align*}
        {W_j^{(l)} \!\!=\!\! \{v_k^i\!:\! k\!\in\!\cM_j, z\in[\sN], k\!\in\!\cM_z, |i-j|=l,  |i-z| \geq l\}.}
\end{align*}
As described in Section~\ref{sec: MUexa}, the transmitted IVs of node $i$ can be categorized into different sets $W_i^{(1)}, \ldots, W_i^{(\lceil\frac{\sN-r}{2}\rceil)}$.
{Since the packets follow the cyclic placement, the desired packets are located at most $\left\lceil \frac{\sN - r}{2} \right\rceil$ hops away from node $n_i$.} 

The transmissions are performed in rounds, with packets at different distances being sent in separate rounds.

At the $k^{th}$ step of round $j$ { where $k\in [j]$ and $j\in [\lceil\frac{\sN-r}{2}\rceil]$}, node $n_i$ generates the coded packets 
\begin{align}
        X_i^{(j,k)} = v_{i-(k-1)}^{i-(k-1)+j} \oplus v_{i+(r-1)+(k-1)}^{i+(k-1)-j}, \label{eq:MU-Xi}
\end{align}
and broadcasts it to nodes $n_{i-1}$ and $n_{i+1}$. Meanwhile, $n_i$ receives the coded packets {$X_{i-1}^{(j,k)} = v_{i-k}^{i-k+j} \oplus v_{i+r+k-3}^{i+k-2-j}$ and $X_{i+1}^{(j,k)} = v_{i-k+2}^{i-k+j+2} \oplus v_{i+r+k-1}^{i+k-j}$ from nodes $n_{i-1}$ and $n_{i+1}$, respectively. Collectively, these are denoted by 
\begin{align*}
    Y_i^{(j,k)} = (X_{i-1}^{(j,k)}, X_{i+1}^{(j,k)}).
\end{align*}}
For $k\in\{1,2\}$, $v_{i+r+k-3}^{i+k-2-j}$ and $v_{i-k+2}^{i-k+j+2}$ are already cached by node $n_i$. Thus, node $n_i$ can easily decode $v_{i-k}^{i-k+j}$ and $v_{i+r+k-1}^{i+k-j}$. These decoded packets $v_{i+r+k-1}^{i+k-j}$ and $v_{i-k}^{i-k+j}$ are added to the cache of node $n_i$ for future use in generating transmitted packets in the next step and for decoding in the $(k+2)^{\text{th}}$ step. 
During the repeated and iterative processing, for $k>2$, $v_{i+r+k-3}^{i+k-2-j}$ and $v_{i-k+2}^{i-k+j+2}$, can be found in $n_i$'s cache and used for decoding. To save storage space, $n_i$ can remove $v_{i+r+k-3}^{i+k-2-j}$ and $v_{i-k+2}^{i-k+j+2}$ from its cache, after the $k^{\textnormal{th}}$ step. In other words, it requires temporarily storing the decoded packets in the $k^{\text{th}}$ step to facilitate the decoding in the $(k+2)^{\text{th}}$.
Repeating the above process until the $j^{\textnormal{th}}$ step of round $j$, at which point node $n_i$ receives 
\begin{align*}
    Y_i^{(j, j)}  = (X_{i-1}^{(j, j)}, X_{i+1}^{(j, j)}) 
\end{align*}
where $X_{i-1}^{(j, j)} = v_{i-j}^{i} \oplus v_{i+r+j-3}^{i-2}$ and $X_{i+1}^{(j, j)} = v_{i-j+2}^{i+2} \oplus v_{i+r+j-1}^{i}$.
The packets $v_{i-j}^{i}$ and $v_{i+r+j-1}^{i}$ finally reach their destined node $n_i$ and are successfully decoded. In this scheme, $\sTe = \lceil\frac{\sN-r}{2}\rceil$ round of transmissions are required, with $j$ steps in $j^{\textnormal{th}}$ round. Therefore, we can calculate the NCL as 
\begin{align}
        \sT_{\tn{2-cyc}}(r,1) &= \left(1 + 2 + \ldots + \Big\lceil\frac{\sN-r}{2}\Big\rceil \right)\cdot \frac{\sN\sB_{2}}{\sN\sB_{2}} \nonumber\\
        &{= \frac{1}{2}\left(\left\lceil\frac{\sN-r}{2}\right\rceil\right)^2 + \frac{1}{2}\left\lceil\frac{\sN-r}{2}\right\rceil}. \label{eq: T(r,1)}
\end{align}
{The pseudo code of the proposed scheme can be found in Algorithm~\ref{alg:MUoverRing}.} 

Note that for the packets in $W_i^{\lceil\frac{\sN-r}{2}\rceil}=\{v_i^{i+\lceil\frac{\sN-r}{2}\rceil}, v_{i+r-1}^{i-\lceil\frac{\sN-r}{2}\rceil}\}$, the packet $v_i^{i+\lceil\frac{\sN-r}{2}\rceil}$ is transmitted by two nodes $\{n_i, n_{i-r+1}\}$ and $v_{i+r-1}^{i-\lceil\frac{\sN-r}{2}\rceil}$ is transmitted by two nodes $\{n_i, n_{i+r-1}\}$ during the first step of round $\lceil\frac{\sN-r}{2}\rceil$ when $\frac{\sN-r}{2} \notin \mathbb{N}^{+}$. To reduce the communication costs, each pair of nodes can send complementary halves of the same packet. 
{As its impact on the results becomes negligible for large $\sN$, it is omitted from the pseudocode for brevity.}

\begin{algorithm}
        \caption{Transmission for all-to-all over Ring Networks ($d=1, r>1$)}\label{alg:MUoverRing}
        \begin{algorithmic}[1]
        \STATE $n_i$ caches $\{{v}_i^j, \ldots, {v}_{i+r-1}^j: j\in[\sN\}$;

        \FOR{Round $j=1,\ldots, \lceil\frac{\sN-r}{2}\rceil$}
                \FOR{Step $k=1,2,\ldots, j$}
                \STATE Node $n_i$ transmits $X_i^{(j,k)}$ to nodes $\{n_{i-1}, n_{i+1}\}$.
                                $$ X_i^{(j, k)} = v_{i-(k-1)}^{i-(k-1)+j} \oplus v_{i+(r-1)+(k-1)}^{i+(k-1)-j} .$$
                \STATE Node $n_i$ receives $ Y_i^{(j,k)}=$$( X_{i-1}^{(j,k)} ,  X_{i+1}^{(j,k)} )$,
                    and decodes $v_{i-k}^{(i-k+j)}$ and $v_{i+r+k-1}^{(i+k-j)}$.
                \vspace{5pt} 
                \STATE Cache update: Delete $v_{i-k+2}^{(i-k+j+2)}$ and $v_{i+r+k-3}^{(i+k-2-j)}$.
                \vspace{5pt}
                \STATE Cache update: Add $v_{i-k}^{(i-k+j)}$ and $v_{i+r+k-1}^{(i+k-j)}$. \label{alg-BuffAdd}
                \ENDFOR
        \ENDFOR
        \end{algorithmic}
\end{algorithm}

\subsubsection{Modification for $1 < d\leq 2(r-1)$} 
The number of broadcasts can be further reduced when the nodes have a larger broadcast distance. Based on the transmission strategy for $d=1$, here is the modified version for $d\leq 2(r-1)$. Unlike the case of $d=1$ where the $3$ adjacent nodes perform reverse carpooling, here we allow nodes that are distance $d$ or $r-1$ apart to perform reverse carpooling. The paths of information flow are demonstrated in Fig.~\ref{fig: MUGen(line)}. 

Define $d_1 = \min\{d,r-1\}$. For $i\in[\sN]$, symbols in $\{W_i^{(1)}, \ldots, W_i^{(d_1)}\}$ can reach the intended nodes in a single broadcast. 
In the first step of round $j>d_1$, the nodes $(n_{i-d_1}, n_i, n_{i+d_1})$ form a reverse carpooling topology. Node $n_i$ broadcasts 
\begin{align}
        X_i^{(j,1)} = v_{i}^{i+j} \oplus v_{i+r-1}^{i-j} \label{eq: X_i(1)-d>1}
\end{align}
to nodes $n_{i-d_1}$ and $n_{i+d_1}$, and receives 
\begin{align}
        Y_i^{(j,1)} = \left( 
                X_{i-d_1}^{(j,1)} = v_{i-d_1}^{i-d_1+j} \oplus v_{i-d_1+r-1}^{i-d_1-j},
                X_{i+d_1}^{(j,1)} = v_{i+d_1}^{i+d_1+j} \oplus v_{i+d_1+r-1}^{i+d_1-j}  \right)
\end{align}
Node $n_i$ can successfully decode $v_{i-d_1}^{i-d_1+j}$ and $v_{i+d_1+r-1}^{i+d_1-j}$ since it already caches $v_{i-d_1+r-1}^{i-d_1-j}$ and $v_{i+d_1}^{i+d_1+j}$. It then broadcasts the mixture
\begin{align*}
        X_i^{(j,2)} = v_{i-d_1}^{i-d_1+j} \oplus v_{i+d_1+r-1}^{i+d_1-j}
\end{align*}
in the second step. 
Starting from the second step, nodes $(n_{i-d}, n_i, n_{i+d})$ form a reverse carpooling topology. Node $n_i$ receives 
\begin{align}
        Y_i^{(j,2)}  \left(  X_{i-d}^{(j,2)} = v_{i-d-d_1}^{i-d-d_1+j} \oplus v_{i - d + d_1+r-1}^{i-d+d_1-j}, X_{i+d}^{(j,2)} = v_{i+d-d_1}^{i+d-d_1+j} \oplus v_{i+d+d_1+r-1}^{i+d+d_1-j} \label{eq: Y_i(2)-d>1} \right) .
\end{align}
Observe that $v_{i - d + d_1+r-1}^{i-d+d_1-j}$ and $v_{i+d-d_1}^{i+d-d_1+j}$ are cached by node $n_i$. The transmission can proceed repeatedly and iteratively in a manner analogous to the case where $d = 1$.
Similarly, it requires temporarily storing the decoded packets in $k^{\text{th}}$ step to facilitate decoding in $(k+2)^{\text{th}}$.

At the end of the $\left(\lceil\frac{j-d_1}{d}\rceil\right)^{\textnormal{th}}$ step, the packets will reach the nodes near their destination. In the final step, the nodes generate the coded packet based on the packets coming from opposite directions and perform regular broadcasts to their intended nodes. 
To enable successful decoding in the final step, when nodes $(n_{i-d}, n_i, n_{i+d})$ (or $(n_{i-d_1}, n_i, n_{i+d_1})$) execute reverse carpooling in earlier steps, the nodes between $n_{i-d}$ and $n_i$, as well as the nodes between $n_i$ and $n_{i+d}$, need to decode the new packets lately flowing through them, though they are not required to forward these packets. This ensures that the intended node in the final step can decode the desired packets by eliminating the undesired ones.
The process is demonstrated in Fig.~\ref{fig: MUGen(line)-finalstep}.
In the cases $1<r$ and $1<d \leq 2(r-1)$, we can calculate the NCL as 
\begin{align}
        \lefteqn{\sT_{\tn{2-cyc}}(r,d)} \nonumber\\
        &= \underbrace{1 +\ldots+ 1}_{d_1} + \underbrace{2+ \ldots + 2}_{d}+ \ldots  +\! \underbrace{\left\lceil\frac{\left\lceil\frac{\sN-r}{2}\right\rceil\!-\!d_1}{d}\right\rceil \!+\! 1 \!+\! \ldots \!+\! \left\lceil\frac{\left\lceil\frac{\sN-r}{2}\right\rceil\!-\!d_1}{d}\right\rceil \!+\! 1}_{\left(\lceil\frac{\sN-r}{2}\rceil-d_1\right)_{\bmod d}} \nonumber\\
        &{= \frac{d}{2}\! \flxd{\ceNr - d_1}^2 \!+\! \frac{3d}{2}\!\flxd{\ceNr \!-\! d_1}\! \!+\! d_1 \!+\! \left(\flxd{\ceNr \!-\! d_1} \!+\! 2\right)\left(\ceNr \!-\! d\right)_d}\nonumber \\
        &{=\left\{\begin{array}{cc}\vspace{5pt} 
                \frac{d}{2}\! \flxd{\ceNr - r + 1}^2 \!+\! \frac{3d}{2}\flxd{\ceNr - r + 1} \!+\! r \!-\! 1 \!+\! \left(\flxd{\ceNr - r + 1} \!+\! 2\right)\left(\ceNr \!-\! d\right)_d &  \substack{d_1 = r-1,\\ 1<r,}\\
                \frac{d}{2}\!\flxd{\ceNr}^2 \!+\! \frac{d}{2}\! \flxd{\ceNr} \!+\! \left(\flxd{\ceNr} + 1\right) \left(\ceNr \!-\! d\right)_d&  \substack{d_1 = d,\\ 1<r,}
        \end{array}\right. } \label{eq: T(1<r,1<d<2r-1)}
\end{align}
{where $(x)_d$ denotes $x \bmod d$ for $x \in \mathbb{Z}$.}

\begin{figure}[t]
        \centering
        \includegraphics[width=0.55\linewidth]{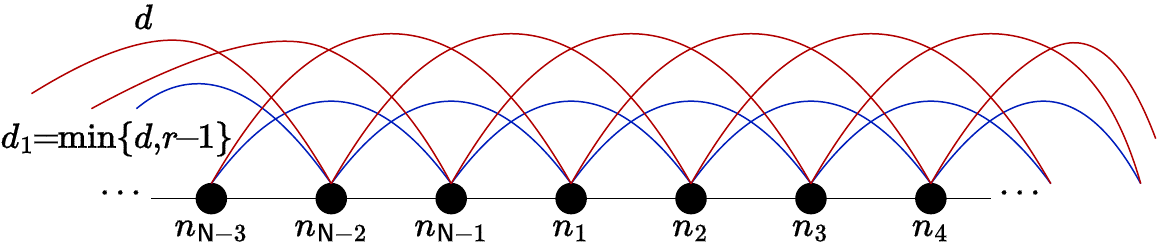}
        \caption{From round $d_1+1$ to the last round of transmission when $1<d\leq 2(r-1)$: any three nodes connected by two adjacent blue solid lines form a \emph{reverse carpooling} topology, where the first step of transmission takes place within this topology; any three nodes connected by two adjacent red solid lines also form a \emph{reverse carpooling} topology, where the transmissions from the second step to the second-to-last step take place within this topology.}
        \label{fig: MUGen(line)}
\end{figure}
\begin{figure}[t]
        \centering
        \subfloat[{\footnotesize Routing to near nodes of destination}]{
                \includegraphics[width=0.5\linewidth]{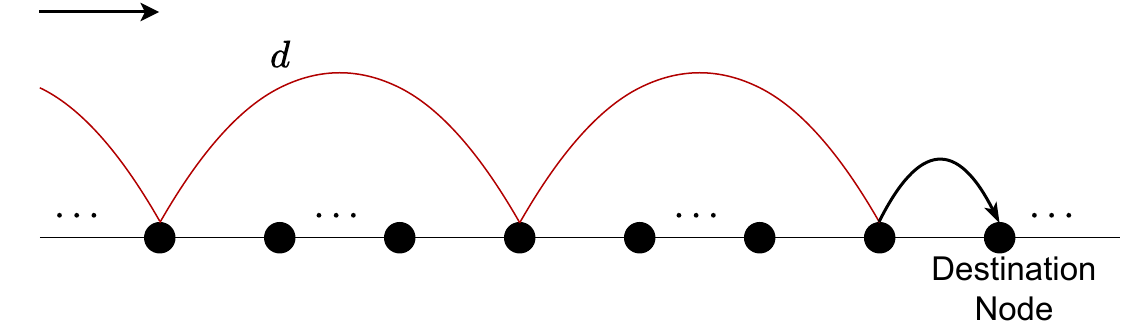}
        }
        \\
        \subfloat[{\footnotesize Broadcast in last step}]{
        \includegraphics[width=0.5\linewidth]{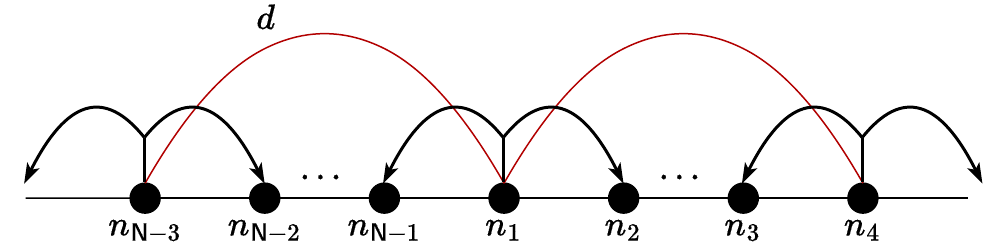}
        }
        \caption{(a) At the end of the second-to-last step in any transmission round, the packets reach the nodes near their destination nodes. (b) At the last step of transmission in any round, node $n_i$ performs a normal multicast to one of pairs of two nodes $\left(n_{i-1}, n_{i+1}\right), \ldots, \left(n_{i-\lceil\frac{d-1}{2}\rceil},n_{i+\lceil\frac{d-1}{2}\rceil}\right)$.}
        \label{fig: MUGen(line)-finalstep}
\end{figure}

\subsubsection{Modification for $2r-1\leq d$} 
We may want to generate coded packets as in \eqref{eq: X_i(1)-d>1} for the first step of transmission when $r=1$ (i.e., let $d_1=0$). However, we can observe that reverse carpooling of packets at the first step is infeasible here,
as there are no redundant contents between any pair of nodes that could be exploited for decoding. 
For $r>1$, the same problem occurs when decoding the packets in \eqref{eq: Y_i(2)-d>1}. Specifically, $v_{i+d-d_1}^{i+d-d_1+j}$ and $v_{i - d + 2d_1}^{i-d+d_1-j}$ are not cached by node $n_i$. A solution is to add an additional transmission step and modify the transmission content in the first step as follows. Let node $n_i$ transmit
\begin{align*}
        X_i^{(j, 0)} = v_{i+r-1}^{i-j}, \text{ and } X_i^{(j,1)} = v_{i}^{i+j}
\end{align*}
to nodes $n_{i-d}$ and $n_{i+d}$, respectively. For $i\in[\sN], j\in[\lceil\frac{\sN-r}{2}\rceil]$, $X_i^{(j, 0)}$ 
{constitute the additional step before the first regular transmission. These packets provide the necessary side information for decoding in the first step, enabling the pipeline to proceed according to the same decoding progression as described previously.}
Therefore, in this case, we can calculate the NCL as 
\begin{align}
        \sT_{\tn{2-cyc}}(r,d) 
        &= \underbrace{1 +\ldots+ 1}_{d} + \underbrace{2+ \ldots + 2}_{d}+ \ldots  +\! \underbrace{ \left\lceil\frac{\left\lceil\frac{\sN-r}{2}\right\rceil}{d}\right\rceil\!+\! \ldots \!+\!  \left\lceil\frac{\left\lceil\frac{\sN-r}{2}\right\rceil}{d}\right\rceil }_{ \lceil\frac{\sN-r}{2}\rceil_{\bmod d} } + \left\lceil\frac{\sN-r}{2}\right\rceil \nonumber\\
        & = {\frac{d}{2}\flxd{\ceNr}^2 + \frac{d}{2}\flxd{\ceNr} + \flxd{\ceNr}\left(\ceNr\right)_d + \ceNr}. \label{eq: T(2r-1<d)}
\end{align}
{where $(x)_d$ denotes $x \bmod d$ for $x \in \mathbb{Z}$.}
From \eqref{eq: T(r,1)}, \eqref{eq: T(1<r,1<d<2r-1)} and \eqref{eq: T(2r-1<d)}, we can conclude that the achievable NCL of the proposed scheme is as stated in Theorem \ref{Theo: allunicast}.

\section{Concluding Remark and Future Directions}\label{sec: conclusion}
This paper investigates the tradeoff between NCL, computation load, and broadcast distance of the coded distributed systems over the ring network. For all-gather computing, we exactly characterized an information-theoretic tradeoff between NCL, computation load, and broadcast distance. For all-to-all computing, we proposed a coded transmission scheme that utilizes a simple network coding method for efficient broadcasting. We further prove that the proposed scheme is asymptotically optimal under cyclic data placement through information-theoretic converse arguments. We found that the coded transmission gain depends on the connectivity among nodes instead of redundant computation load in the network topology of the ring.  Notably, the optimal tradeoff between NCL, computation load, and broadcast distance under arbitrary file placement remains an open problem for future work. Besides, extensions to other network topologies as follows, such as 2-D torus networks, satellite networks, are also worth investigating.
{
\subsection{Extension to $2$-d Torus Networks}
        The ring network studied in this work is a one-dimensional $k$-nearest-neighbor circulant graph. The proposed framework can be extended to other structured topologies, such as two-dimensional torus, provided that the broadcast capability is defined in a compatible manner. In particular, consider a 2-D torus network where each node performs distance-limited broadcast along one dimension at a time, with the broadcast distance defined analogously to the ring model. In this case, a natural extension is a dimension-wise nested scheme.
        For the all-gather task, one may first apply the proposed ring-based coded transmission independently along one dimension (e.g., rows) to aggregate intermediate values within each row, and then apply the same scheme along the other dimension (e.g., columns) to complete global aggregation. For the all-to-all task, a similar nesting applies, where intermediate values are first delivered to their target rows and then delivered to their final destinations along the second dimension using the proposed schedules within each sub-ring. 
        Under this nested construction, reverse-carpooling opportunities arise independently along each dimension, preserving the key insights established for ring networks. If the broadcast model differs substantially from that assumed above, new coding and scheduling designs may be required, which is a worthy direction for future work.
}
{
\subsection{Relevance to Satellite Orbital-Plane Networks.}
        The ring abstraction studied in this work is also relevant to satellite constellations, where satellites within a single orbital plane form a ring via inter-satellite links (ISLs) \cite{radhakrishnan2016survey, ekici2001distributed}. In this setting, $\sN$ corresponds to the number of satellites per plane, $d$ represents the hop budget of ISL communication within a scheduling window, and $r$ reflects redundancy enabled by replicated sensing or preprocessing. The all-gather and all-to-all tasks naturally arise from plane-wide aggregation (e.g., ephemeris dissemination or onboard model aggregation) and directional exchanges (e.g., task offloading or cooperative calibration), respectively. The reduction in normalized communication load implies fewer ISL transmissions, which is particularly relevant under half-duplex and slotted operation. The investigations under more complex broadcast or dynamic connectivity models are interesting future directions.
}

\begin{appendices}

    \section{A Lower Bound on NCL for all-gather Computing}
    \label{sec: sub-prof_sintasks_lb}
    The main idea follows Lemma 1 in \cite{fragouli2008efficient}. For each IV, there are $\sN-r$ receivers to cover, and it can reach $2d$ receivers through one broadcast transmission. Formally, we have the information-theoretic representation as follows.
        From \eqref{eq: Dec-alltoall} we have 
        \begin{align}
                H\left(\left( V_1,\ldots,V_{\sN}\right) \Big| \cM_i, \left(Y_{i}^{(1)},\ldots,Y_{i}^{(\sT^{e})}\right)\right) = 0.
        \end{align}
        From the definition of mutual information, we have
        \begin{align}
                H\left(V_1,\ldots,V_{\sN}\right)  &= I\left( \left(V_1,\ldots,V_{\sN}\right) ; \cM_i, \left( Y_{i}^{(1)},\ldots,Y_{i}^{(\sT^{e})}\right)\right) \nonumber\\
                &= H\left( \cM_i, \left(Y_{i}^{(1)},\ldots,Y_{i}^{(\sT^{e})}\right) \right) - H\left( \cM_i, \left(Y_{i}^{(1)},\ldots,Y_{i}^{(\sT^{e})}\right) \Big| \left(V_1,\ldots,V_{\sN}\right)\right) \nonumber\\
                &\leq H\left(\cM_i\right) + H\left( Y_{i}^{(1)},\ldots,Y_{i}^{(\sT^{e})} \right)
        \end{align}
        Then, we have
        \begin{align}\label{eq: entropyY}
                H\left( Y_{i}^{(1)},\ldots,Y_{i}^{(\sT^{e})} \right) &\geq H\left( V_1,\ldots,V_{\sN} \right) - H\left(\cM_i\right) \nonumber \\
                & = (\sN-|\cM_i|)\sB_{1}
        \end{align}
        Besides, 
        \begin{align}\label{eq: bitsY}
                H\left( Y_{i}^{(1)},\ldots,Y_{i}^{(\sT^{e})} \right) \leq \sum_{t=1}^{\sT^{e}}\sum_{k=i-d}^{i+d}l_{k}^{(t)}
        \end{align}
        Combining \eqref{eq: entropyY} and \eqref{eq: bitsY}, we have
        \begin{align}
                \sum_{t=1}^{\sT^{e}}\sum_{k=i-d}^{i+d}l_{k}^{(t)} \geq (\sN-|\cM_i|)\sB_{1}
        \end{align}
        Summing up the above inequality over $i\in \{1,\ldots,\sN\}$, we obtain
        \begin{align}
                \sum_{i=1}^{\sN}\sum_{t=1}^{\sT^{e}}\sum_{k=i-d}^{i+d}l_{k}^{(t)} \geq (\sN^2-\sum_{i=1}^{\sN}|\cM_i|)\sB_{1},
        \end{align} 
        i.e., 
        \begin{align}
                2d\sum_{t=1}^{\sT^{e}}\sum_{k=1}^{\sN}l_{k}^{(t)} \geq \sN(\sN-r)\sB_{1}.
        \end{align}
        Then it is readily shown that 
        \begin{align}
                \frac{\sum_{t=1}^{\sT^{e}}\sum_{k=1}^{\sN}l_{k}^{(t)}}{\sN\sB_{1}} \geq \frac{\sN-r}{2d}.
        \end{align}
        This completes the proof.

    \section{A Lower Bound on NCL for all-to-all Computing $(r\leq \lceil\frac{\sN}{2}\rceil-1)$}
    \label{sec: all-to-alls_lb_placement}
    Consider a cut as Fig.~\ref{fig: ConvProvNew} shown; {we divide the nodes into two sets, $\mathcal{A}_j$ and the complement $\bar{\mathcal{A}}_j$, where the nodes $\{n_{j-d},\ldots,n_{j-1}\}$ and $\{n_{j+s},\ldots,n_{j+s+d-1}\}$ belong to ${\bar{\cal A}_j}$ have direct connection with ${\cal A}_j$. 
    Define the set of files cached by nodes in ${\cal A}_j$ as 
    \begin{align}
        \cM_{{\cal A}_j} \triangleq \underset{{k\in{\cal A}_j}}{\cup} {\cal M}_k \subseteq \{w_1, \ldots, w_{\sN}\},
    \end{align}
    and its complement as 
    \begin{align}
        \bar{\cM}_{{\cal A}_j} \triangleq \{w_1, \ldots, w_{\sN}\}\backslash \cM_{{\cal A}_j}
    \end{align}
    which has the size of $|\bar{\cM}_{{\cal A}_j}| = m_j$.
    Additionally, we denote the set of IVs that are required by nodes in ${\cal A}_j$ but cannot be computed locally inside ${\cal A}_j$ as
    \begin{align}
        V({\cal A}_j) = \{v_i^k: w_i \in \bar{\cM}_{{\cal A}_j}, k\in{\cal A}_j\},
    \end{align} 
    and denote $\{X_{j-d}^{\sTe}, \ldots, X_{j-1}^{\sTe}\}$ and $\{X_{j+s}^{\sTe}, \ldots, X_{j+s+d-1}^{\sTe}\}$ as $X_{j-}^{\sTe}$ and $X_{j+}^{\sTe}$, respectively.
        \begin{figure}[tbp]
            \centering
            \includegraphics[width=0.25\linewidth]{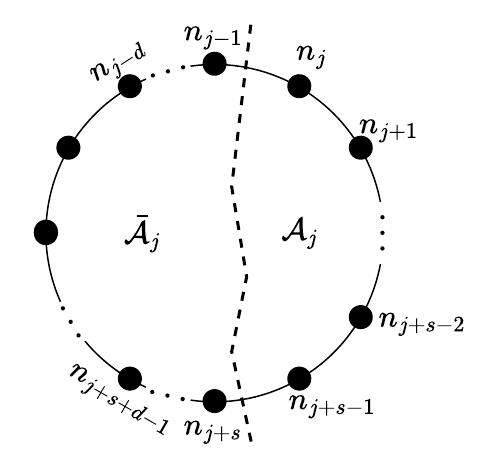}
            \caption{A cut of the nodes. The nodes are divided into two complementary subsets, $\cal A_j$ and $\bar{\cal A}_j$, separated by a dotted line.}
            \label{fig: ConvProvNew}
        \end{figure}
     We have the following relations
     \begin{align}
        H(V({\cal A}_j)) &= H(V({\cal A}_j)| \cM_{{\cal A}_j}) \label{eq: ConP-inde2}\\
        &= I(X_{j-}^{\sTe}, X_{j+}^{\sTe}; V({\cal A}_j)| \cM_{{\cal A}_j}) + H( V({\cal A}_j)| X_{j-}^{\sTe}, X_{j+}^{\sTe}, \cM_{{\cal A}_j} ) \label{eq: ConP-DeMI2}\\
        &= I(X_{j-}^{\sTe}, X_{j+}^{\sTe}; V({\cal A}_j)| \cM_{{\cal A}_j}) \label{eq: ConP-Fano2}\\
        &= H(X_{j-}^{\sTe}, X_{j+}^{\sTe}| \cM_{{\cal A}_j}) - H(X_{j-}^{\sTe}, X_{j+}^{\sTe}|V({\cal A}_j), \cM_{{\cal A}_j}) \nonumber \\
        &\leq H(X_{j-}^{\sTe}, X_{j+}^{\sTe} ) \label{eq: ConP-lgeqa2}\\
        &= \sum_{i=1}^{d}\sum_{t=1}^{\sTe}(l_{j-i}^{(t)} + l_{j+s+i-1}^{(t)}), \label{eq: ConP-lgeqa3}
     \end{align}
     where \eqref{eq: ConP-inde2} holds because of the independence between packets, \eqref{eq: ConP-DeMI2} follows the definition of the mutual information, \eqref{eq: ConP-Fano2} comes from that $H( V({\cal A}_j)| Y_{{\cal A}_j}^{\sTe}, \cM_{{\cal A}_j} ) = 0$ holds because of the decoding constraint and $Y_{{\cal A}_j}^{\sTe}$ are functions of $(X_{j-}^{\sTe}, X_{j+}^{\sTe}, \cM_{{\cal A}_j})$, and \eqref{eq: ConP-lgeqa2} holds because conditioning reduces entropy, and \eqref{eq: ConP-lgeqa3} follows the definition of $X_{j-}^{\sTe}$ and $X_{j+}^{\sTe}$. Besides, 
     \begin{align}
        H(V({\cal A}_j)) = |\bar{\cM}_{{\cal A}_j}| \cdot |{\cal A}_j| \cdot \sB_{2} = m_j \cdot s \cdot \sB_{2},
     \end{align}
     which implies that
     \begin{align}
        \sum_{i=1}^{d}\sum_{t=1}^{\sTe}(l_{j-i}^{(t)} + l_{j+s+i-1}^{(t)}) \geq m_j \cdot s \cdot \sB_{2} \label{eq: ConP-lgeqacut}.
     \end{align}}
     Taking the partitions $\{{\cal A}_j, \bar{\cal A}_j\}$ as the original version, we can generate the new cut by moving some nodes. We move the node $n_{j}$ from ${\cal A}_j$ to ${\bar{\cal A}_j}$, and the node $n_{j+s}$ from ${\bar{\cal A}_1}$ to ${\cal A}_1$,  leading to the new cut $\{{\cal A}_{j+1}, \bar{\cal A}_{j+1}\}$. There are $\sN$ cuts like this, and the above arguments hold for every new cut. Summing up the above inequality \eqref{eq: ConP-lgeqacut} over all these cuts, we have
     {
     \begin{align}
        \sum_{j=1}^{\sN} \sum_{i=1}^{d}\sum_{t=1}^{\sTe}(l_{j-i}^{(t)} + l_{j+s+i-1}^{(t)}) \geq \sum_{j=1}^{\sN} m_j \cdot s \cdot \sB_{2}, \label{eq: ConP-lgeqacutsum}
     \end{align}
     which is equivalent to
     \begin{align}
        2d \sum_{i=1}^{\sN} \sum_{t=1}^{\sTe} l_i^{(t)} \geq \sum_{j=1}^{\sN} m_j \cdot s \cdot \sB_{2}. \label{eq: ConP-lgeqacutsum2}
     \end{align}
        Note that \eqref{eq: ConP-lgeqacutsum} holds for $({\cal A}_1, \bar{\cal A}_1, \ldots, {\cal A}_{\sN}, \bar{\cal A}_{\sN})$ of arbitrary size $|{\cal A}_1| = \ldots = |{\cal A}_{\sN}| = s$. Recall that $m_j$ denotes the number of files that are not cached by nodes in ${\cal A}_j$. We have
        \begin{align}
            \sum_{j=1}^{\sN} |{\cal{M}}_{{\cal A}_j}| \leq \sum_{j=1}^{\sN}\sum_{u\in{\cal A}_j} |{\cal M}_u| \leq s\sum_{u=1}^{\sN} |{\cal M}_u| = sr\sN,
        \end{align}
        which implies that
        \begin{align} \label{eq: MissingIVCount}           
                \sum_{j=1}^{\sN} m_j = \sum_{j=1}^{\sN} (\sN - |{\cal{M}}_{{\cal A}_j}|) \geq \sN^2 - sr\sN.
        \end{align}
        Then, we have
        \begin{align}
                \sum_{i=1}^{\sN} \sum_{t=1}^{\sTe} l_i^{(t)} / (\sN\sB_{2}) 
                &\geq \max_{s\in\{1, \ldots \sN\}} \frac{\sum_{j=1}^{\sN} m_j \cdot s \cdot \sB_{2}}{2d \cdot \sN\sB_{2}} \nonumber\\
                &\geq \max_{s\in\{1, \ldots \sN\}} \frac{s(\sN- sr)}{2d}.
        \end{align}
        \begin{itemize}
                \item For the arbitrary file placement given computation load $r$, the optimizer is typically around
                $s = \frac{\sN}{2r}$ (requiring $s \geq 1$, i.e, $r \leq \frac{\sN}{2}$), giving the scaling of lower bound as 
                \begin{align}
                        \sum_{i=1}^{\sN} \sum_{t=1}^{\sTe} l_i^{(t)} / (\sN\sB_{2}) \gtrsim \frac{\sN^2}{8dr}.
                \end{align}
                \item For the cyclic file placement given computation load $r$, we have $m_j = \sN - s - r_j + 1$ for all $j\in[\sN]$, which refines the \eqref{eq: MissingIVCount} as 
                \begin{align}
                        \sum_{j=1}^{\sN} m_j = \sum_{i=1}^{\sN}(\sN - s - r_i + 1) = \sN^2 -s\sN - r\sN + \sN,
                \end{align}
                where $r_i$ is the number of nodes that cache file $w_i$. Then, we have
                \begin{align}
                        \sum_{i=1}^{\sN} \sum_{t=1}^{\sTe} l_i^{(t)} / (\sN\sB_{2}) \geq \max_{s\in\{1, \ldots \sN\}} \frac{s(\sN - s - r + 1)}{2d},
                \end{align}
                which is optimized at $s = \frac{\sN - r + 1}{2}$ (requiring $s \geq 1$, i.e, $r \leq \sN - 1$), giving the scaling of lower bound as
                \begin{align}
                        \sum_{i=1}^{\sN} \sum_{t=1}^{\sTe} l_i^{(t)} / (\sN\sB_{2}) \gtrsim \frac{(\sN - r + 1)^2}{8d}.
                \end{align}
        \end{itemize}
     }
        This completes the proof.

     \section{ The optimal NCL for All-to-all When $r\geq \frac{\sN}{2}$ and $d=1$}\label{sec: allunicast-arb}
    If the file placement can be arbitrarily designed, the optimal NCL can be achieved when $r\geq \lceil\frac{\sN}{2}\rceil$ and $d=1$ as:
    \begin{align}
            \sT_{2}(r,1) = \frac{\sN-r}{2} .
    \end{align}
    For integer-valued computation load $r\in \{\lceil\frac{\sN}{2}\rceil, \ldots, \sN\}$, a file placement of the node $n_i$ as 
    \begin{align}
            \cM_i =& \{w_{i+4(t-1)}: t\in[\lceil\frac{\sN}{4}\rceil]\} \nonumber\\
            &\cup \{w_{i+1+4(t-1)}: t\in[\min((r-\lceil\frac{\sN}{4}\rceil),\lceil\frac{\sN}{4}\rceil)]\} \nonumber\\
            &\cup \{w_{i+2+4(t-1)}: t\in[\min((r-\lceil\frac{\sN}{2}\rceil),\lceil\frac{\sN}{4}\rceil)]\} \nonumber\\
            &\cup \{w_{i+3+4(t-1)}: t\in[r-\lceil\frac{3\sN}{4}\rceil]\},
    \end{align}
    achieve the NCL of $\left\lceil \frac{\sN-r}{2} \right\rceil.$ 
    The proposed transmission strategy for the problem is straightforward. We provide an example to illustrate it. 
    The file placement for $\sN=8$, $r=4$ is shown as Table~\ref{tab:arbseg-N8r4d1}. The symbol $*$ at the row $n_i$ and column $w_j$ where $i\in[8]$ and $j\in[8]$ means that the node $n_i$ compute the IVs from $w_j$. Node $n_i$ broadcast $X_{i}^{(1)} = v_i^{i+1}\oplus v_{i+1}^{i-1}$ and $X_{i}^{(2)} = v_{i+4}^{i+1} \oplus v_{i+5}^{i-1}$. For example, node $n_4$ broadcasts
        $$ X_{4}^{(1)} = v_4^{5}\oplus v_{5}^{3} \text{ and } X_{4}^{(2)} = v_{8}^{5} \oplus v_{1}^{3},$$ 
    and receives 
        \begin{align*}
            Y_4^{(1)}  = \left(
                    X_{3}^{(1)} = v_{3}^{4} \oplus v_{4}^{2},
                    X_{5}^{(1)} = v_{5}^{6} \oplus v_{6}^{4} \right)  
                \text{ and } Y_4^{(2)} = \left( X_{3}^{(2)} = v_{7}^{4} \oplus v_{8}^{2} 
                    X_{5}^{(2)} = v_{1}^{6} \oplus v_{2}^{4}\right) . 
    \end{align*}
    We can observe that node $n_4$ can get desired packets $\{v_3^4, v_7^4\}$ and $\{v_6^4, v_2^4\}$ from nodes $n_3$ and $n_5$, respectively. The other nodes can similarly obtain desired packets. It is clear that the NCL is $\sT_{2}(4,1) = 2 =\left\lceil \frac{8-4}{2} \right\rceil$.
    \begin{table}[tbp]
            \caption{A file placement achieving optimal transmission for $\sN=8$, $r=4$ and $d=1$ \label{tab:arbseg-N8r4d1}}
            \centering
            \begin{tabular}{|c|c|c|c|c|c|c|c|c|}
            \hline
              & $w_1$ & $w_2$ & $w_3$ & $w_4$ & $w_5$ & $w_6$ & $w_7$ & $w_8$ \\
            \hline
            $n_1$ & $*$ & $*$ &   &   & * & * &   &   \\
            \hline
            $n_2$ &   & $*$ & $*$ &   &   & * & * &   \\
            \hline
            $n_3$ &   &   & $*$ & $*$  &   &   & * & *  \\
            \hline
            $n_4$ & * &   &   & $*$ & $*$  &   &   & * \\
            \hline
            $n_5$ & * & * &   &   & $*$ & $*$  &   &   \\
            \hline
            $n_6$ &   & * & * &   &   & $*$ & $*$  &   \\
            \hline
            $n_7$ &   &   & * & * &   &   & $*$ & $*$  \\
            \hline
            $n_8$ & $*$  &   &  & * & * &   &   &  $*$ \\
            \hline
            \end{tabular}
    \end{table}

    The proof of the lower bound is presented as follows.
    From the definition of the mutual information and Fano's inequality, we have
        \begin{align}
                H(v_{[\sN]}^{i} | {\cal M}_i) &= I(v_{[\sN]}^{i}; Y_{i}^{\sTe}|{\cal M}_i) + H(_{[\sN]}^{i}, Y_{i}^{\sTe} | {\cal M}_i) \nonumber \\
                &= H( Y_{i}^{\sTe}|{\cal M}_i) - H( Y_{i}^{\sTe}|v_{[\sN]}^{i}, {\cal M}_i) + \sTe\epsilon_{\sTe}\nonumber\\
                &\leq H(Y_{i}^{\sTe}),
        \end{align}
        where $\epsilon_{\sTe}$ vanishes as $\sTe\to\infty$. This implies that
        \begin{align}
                H(Y_{i}^{\sTe}) &\geq H(v_{[\sN]}^{i} | {\cal M}_i) - \sTe\epsilon_{\sTe} \nonumber\\
                & = (\sN-|{\cal M}_i|)\sB_{2} - \sTe\epsilon_{\sTe}.
        \end{align}
        Summing the above inequality over all $i\in[\sN]$ yields
        \begin{align}
                2\sum_{i}^{\sN} \sum_{t=1}^{\sTe} l_i^{(t)}
                &= \sum_{i=1}^{\sN} H(Y_{i}^{\sTe}) \nonumber\\
                &\geq \sum_{i=1}^{\sN} (\sN-|{\cal M}_i|)\sB_{2} - \epsilon' \nonumber\\
                & = \sN(\sN-r)\sB_{2}- \epsilon',
        \end{align}
        i.e., 
        \begin{align}
                \sT^{\star}_{2}(r,1) = \sum_{i}^{\sN} \sum_{t=1}^{\sTe} l_i^{(t)} / (\sN\sB_{2}) &\geq \frac{\sN-r}{2} - \epsilon',
        \end{align}
        where $\epsilon^{\prime}$ vanishes as $\sTe\to\infty$.
        This completes the proof.

\end{appendices}


 

\bibliographystyle{IEEEtran}
\bibliography{ref}

\begin{thebibliography}{10}
\providecommand{\url}[1]{#1}
\csname url@samestyle\endcsname
\providecommand{\newblock}{\relax}
\providecommand{\bibinfo}[2]{#2}
\providecommand{\BIBentrySTDinterwordspacing}{\spaceskip=0pt\relax}
\providecommand{\BIBentryALTinterwordstretchfactor}{4}
\providecommand{\BIBentryALTinterwordspacing}{\spaceskip=\fontdimen2\font plus
\BIBentryALTinterwordstretchfactor\fontdimen3\font minus
  \fontdimen4\font\relax}
\providecommand{\BIBforeignlanguage}[2]{{%
\expandafter\ifx\csname l@#1\endcsname\relax
\typeout{** WARNING: IEEEtran.bst: No hyphenation pattern has been}%
\typeout{** loaded for the language `#1'. Using the pattern for}%
\typeout{** the default language instead.}%
\else
\language=\csname l@#1\endcsname
\fi
#2}}
\providecommand{\BIBdecl}{\relax}
\BIBdecl

\bibitem{huang2025optimality}
Z.~Huang, M.~Cheng, Q.~T. Sun, and Y.~Wu, ``On the optimality of all-to-all
  broadcast over cache-aided ring networks,'' in \emph{2025 IEEE International
  Symposium on Information Theory (ISIT)}.\hskip 1em plus 0.5em minus
  0.4em\relax IEEE, 2025, pp. 1--6.

\bibitem{lindsay2021evolution}
D.~Lindsay, S.~S. Gill, D.~Smirnova, and P.~Garraghan, ``The evolution of
  distributed computing systems: from fundamental to new frontiers,''
  \emph{Computing}, vol. 103, no.~8, pp. 1859--1878, 2021.

\bibitem{lee2017big}
C.~Lee, Z.~Luo, K.~Y. Ngiam, M.~Zhang, K.~Zheng, G.~Chen, B.~C. Ooi, and
  W.~L.~J. Yip, ``Big healthcare data analytics: Challenges and applications,''
  \emph{Handbook of large-scale distributed computing in smart healthcare}, pp.
  11--41, 2017.

\bibitem{zhang2018quick}
Z.~Zhang, L.~Yin, Y.~Peng, and D.~Li, ``A quick survey on large scale
  distributed deep learning systems,'' in \emph{2018 IEEE 24th International
  Conference on Parallel and Distributed Systems (ICPADS)}.\hskip 1em plus
  0.5em minus 0.4em\relax IEEE, 2018, pp. 1052--1056.

\bibitem{hirsch2018augmenting}
M.~Hirsch, C.~Mateos, and A.~Zunino, ``Augmenting computing capabilities at the
  edge by jointly exploiting mobile devices: A survey,'' \emph{Future
  Generation Computer Systems}, vol.~88, pp. 644--662, 2018.

\bibitem{zhang2013performance}
Z.~Zhang, L.~Cherkasova, and B.~T. Loo, ``Performance modeling of mapreduce
  jobs in heterogeneous cloud environments,'' in \emph{2013 IEEE Sixth
  International Conference on Cloud Computing}.\hskip 1em plus 0.5em minus
  0.4em\relax IEEE, 2013, pp. 839--846.

\bibitem{ng2021comprehensive}
J.~S. Ng, W.~Y.~B. Lim, N.~C. Luong, Z.~Xiong, A.~Asheralieva, D.~Niyato,
  C.~Leung, and C.~Miao, ``A comprehensive survey on coded distributed
  computing: Fundamentals, challenges, and networking applications,''
  \emph{IEEE Communications Surveys \& Tutorials}, vol.~23, no.~3, pp.
  1800--1837, 2021.

\bibitem{paschos2018role}
G.~S. Paschos, G.~Iosifidis, M.~Tao, D.~Towsley, and G.~Caire, ``The role of
  caching in future communication systems and networks,'' \emph{IEEE Journal on
  Selected Areas in Communications}, vol.~36, no.~6, pp. 1111--1125, 2018.

\bibitem{li2017fundamental}
S.~Li, M.~A. Maddah-Ali, Q.~Yu, and A.~S. Avestimehr, ``A fundamental tradeoff
  between computation and communication in distributed computing,'' \emph{IEEE
  Transactions on Information Theory}, vol.~64, no.~1, pp. 109--128, 2017.

\bibitem{yu2017optimally}
Q.~Yu, S.~Li, M.~A. Maddah-Ali, and A.~S. Avestimehr, ``How to optimally
  allocate resources for coded distributed computing?'' in \emph{2017 IEEE
  International Conference on Communications (ICC)}.\hskip 1em plus 0.5em minus
  0.4em\relax IEEE, 2017, pp. 1--7.

\bibitem{yan2017placement}
Q.~Yan, M.~Cheng, X.~Tang, and Q.~Chen, ``On the placement delivery array
  design for centralized coded caching scheme,'' \emph{IEEE Transactions on
  Information Theory}, vol.~63, no.~9, pp. 5821--5833, 2017.

\bibitem{cheng2024asymptotically}
M.~Cheng, Y.~Wu, X.~Li, and D.~Wu, ``Asymptotically optimal coded distributed
  computing via combinatorial designs,'' \emph{IEEE/ACM Transactions on
  Networking}, 2024.

\bibitem{li2017scalable}
S.~Li, Q.~Yu, M.~A. Maddah-Ali, and A.~S. Avestimehr, ``A scalable framework
  for wireless distributed computing,'' \emph{IEEE/ACM Transactions on
  Networking}, vol.~25, no.~5, pp. 2643--2654, 2017.

\bibitem{li2019wireless}
F.~Li, J.~Chen, and Z.~Wang, ``Wireless mapreduce distributed computing,''
  \emph{IEEE Transactions on Information Theory}, vol.~65, no.~10, pp.
  6101--6114, 2019.

\bibitem{bi2024normalized}
Y.~Bi, M.~Wigger, and Y.~Wu, ``Normalized delivery time of wireless
  mapreduce,'' \emph{IEEE Transactions on Information Theory}, 2024.

\bibitem{wan2020topological}
K.~Wan, M.~Ji, and G.~Caire, ``Topological coded distributed computing,'' in
  \emph{GLOBECOM 2020-2020 IEEE Global Communications Conference}.\hskip 1em
  plus 0.5em minus 0.4em\relax IEEE, 2020, pp. 1--6.

\bibitem{hu2024exploiting}
H.~Hu, S.~Li, M.~Cheng, S.~Ma, Y.~Shi, and Y.~Wu, ``On exploiting network
  topology for hierarchical coded multi-task learning,'' \emph{IEEE
  Transactions on Communications}, 2024.

\bibitem{githubGitHubBaiduresearchbaiduallreduce}
Baidu, ``Baidu ring allreduce,'' [Online]. Available:
  \url{https://github.com/baidu-research/baidu-allreduce}, 2017.

\bibitem{jia2018highly}
X.~Jia, ``Highly scalable deep learning training system with mixed-precision:
  Training imagenet in four minutes,'' \emph{arXiv preprint arXiv:1807.11205},
  2018.

\bibitem{jiang20202d}
Y.~Jiang, H.~Gu, Y.~Lu, and X.~Yu, ``2d-hra: Two-dimensional hierarchical
  ring-based all-reduce algorithm in large-scale distributed machine
  learning,'' \emph{IEEE Access}, vol.~8, pp. 183\,488--183\,494, 2020.

\bibitem{shen2023ringsfl}
J.~Shen, N.~Cheng, X.~Wang, F.~Lyu, W.~Xu, Z.~Liu, K.~Aldubaikhy, and X.~Shen,
  ``Ringsfl: An adaptive split federated learning towards taming client
  heterogeneity,'' \emph{IEEE Transactions on Mobile Computing}, 2023.

\bibitem{elkordy2022basil}
A.~R. Elkordy, S.~Prakash, and S.~Avestimehr, ``Basil: A fast and
  byzantine-resilient approach for decentralized training,'' \emph{IEEE Journal
  on Selected Areas in Communications}, vol.~40, no.~9, pp. 2694--2716, 2022.

\bibitem{radhakrishnan2016survey}
R.~Radhakrishnan, W.~W. Edmonson, F.~Afghah, R.~M. Rodriguez-Osorio, F.~Pinto,
  and S.~C. Burleigh, ``Survey of inter-satellite communication for small
  satellite systems: Physical layer to network layer view,'' \emph{IEEE
  Communications Surveys \& Tutorials}, vol.~18, no.~4, pp. 2442--2473, 2016.

\bibitem{ekici2001distributed}
E.~Ekici, I.~F. Akyildiz, and M.~D. Bender, ``A distributed routing algorithm
  for datagram traffic in leo satellite networks,'' \emph{IEEE/ACM Transactions
  on networking}, vol.~9, no.~2, pp. 137--147, 2001.

\bibitem{sanders2019sequential}
P.~Sanders, K.~Mehlhorn, M.~Dietzfelbinger, and R.~Dementiev, \emph{Sequential
  and Parallel Algorithms and Data Structures}.\hskip 1em plus 0.5em minus
  0.4em\relax Springer, 2019.

\bibitem{tandon2017gradient}
R.~Tandon, Q.~Lei, A.~G. Dimakis, and N.~Karampatziakis, ``Gradient coding:
  Avoiding stragglers in distributed learning,'' in \emph{International
  Conference on Machine Learning}.\hskip 1em plus 0.5em minus 0.4em\relax PMLR,
  2017, pp. 3368--3376.

\bibitem{ye2018communication}
M.~Ye and E.~Abbe, ``Communication-computation efficient gradient coding,'' in
  \emph{International Conference on Machine Learning}.\hskip 1em plus 0.5em
  minus 0.4em\relax PMLR, 2018, pp. 5610--5619.

\bibitem{huang2023fundamental}
W.~Huang, K.~Wan, H.~Sun, M.~Ji, R.~C. Qiu, and G.~Caire, ``Fundamental limits
  of distributed linearly separable computation under cyclic assignment,'' in
  \emph{2023 IEEE International Symposium on Information Theory (ISIT)}.\hskip
  1em plus 0.5em minus 0.4em\relax IEEE, 2023, pp. 2296--2301.

\bibitem{maddah2014fundamental}
M.~A. Maddah-Ali and U.~Niesen, ``Fundamental limits of caching,'' \emph{IEEE
  Transactions on information theory}, vol.~60, no.~5, pp. 2856--2867, 2014.

\bibitem{maddah2014decentralized}
------, ``Decentralized coded caching attains order-optimal memory-rate
  tradeoff,'' \emph{IEEE/ACM Transactions On Networking}, vol.~23, no.~4, pp.
  1029--1040, 2014.

\bibitem{li2015coded}
S.~Li, M.~A. Maddah-Ali, and A.~S. Avestimehr, ``Coded mapreduce,'' in
  \emph{2015 53rd Annual Allerton Conference on Communication, Control, and
  Computing (Allerton)}.\hskip 1em plus 0.5em minus 0.4em\relax IEEE, 2015, pp.
  964--971.

\bibitem{khalesi2023multi}
A.~Khalesi and P.~Elia, ``Multi-user linearly-separable distributed
  computing,'' \emph{IEEE Transactions on Information Theory}, vol.~69, no.~10,
  pp. 6314--6339, 2023.

\bibitem{fragouli2008efficient}
C.~Fragouli, J.~Widmer, and J.-Y. Le~Boudec, ``Efficient broadcasting using
  network coding,'' \emph{IEEE/ACM transactions on Networking}, vol.~16, no.~2,
  pp. 450--463, 2008.

\bibitem{effros2006tiling}
M.~Effros, T.~Ho, and S.~Kim, ``A tiling approach to network code design for
  wireless networks,'' in \emph{2006 IEEE Information Theory Workshop-ITW'06
  Punta del Este}.\hskip 1em plus 0.5em minus 0.4em\relax IEEE, 2006, pp.
  62--66.

\bibitem{huang2024coded}
Z.~Huang, K.~Yuan, S.~Ma, Y.~Bi, and Y.~Wu, ``Coded computing for half-duplex
  wireless distributed computing systems via interference alignment,''
  \emph{IEEE Transactions on Wireless Communications}, 2024.

\bibitem{li2009constant}
Z.~Li, B.~Li, and L.~C. Lau, ``A constant bound on throughput improvement of
  multicast network coding in undirected networks,'' \emph{IEEE Transactions on
  Information Theory}, vol.~55, no.~3, pp. 1016--1026, 2009.

\bibitem{kiraly2008approximate}
T.~Kir{\'a}ly and L.~C. Lau, ``Approximate min--max theorems for steiner
  rooted-orientations of graphs and hypergraphs,'' \emph{Journal of
  Combinatorial Theory, Series B}, vol.~98, no.~6, pp. 1233--1252, 2008.

\end{thebibliography}

\end{document}